\def\subsection{\@startsection{subsection}{2}%
	\z@{.5\linespacing\@plus.7\linespacing}{.5\linespacing}%
	{\normalfont\scshape\centering}}
\def\subsubsection{\@startsection{subsubsection}{2}%
	\z@{.5\linespacing\@plus.7\linespacing}{.5\linespacing}%
	{\normalfont\scshape\centering}}
\numberwithin{equation}{section}
\newtheorem{thm}{Theorem}[section]
\newtheorem{lem}[thm]{Lemma}
\newtheorem{prop}[thm]{Proposition}
\newtheorem{assume}{Assumption}
\def\ititem#1#2{\par\vskip12pt\noindent\ifx#1.$\bullet$\else(#1)\fi\ {\it#2}\medskip\\}
\def\tr{\operatorname{tr}}
\def\idty{\mathbbm{1}}%  unit operator
\def\Rl{{\mathbb R}}\def\Cx{{\mathbb C}}
\def\Ir{{\mathbb Z}}\def\Nl{{\mathbb N}}
\def\norm #1{\Vert #1\Vert}
\def\dom{{\mathop{\rm dom}\nolimits}\,}
\def\bra #1{\langle #1\vert}
\def\ket #1{\vert #1\rangle}
\def\braket #1#2{\langle #1 \mid #2\rangle}
\def\ketbra #1#2{\vert #1\rangle \langle #2\vert}
\def\kettbra#1{\ketbra{#1}{#1}}
\def\tr{\mathop{\rm tr}\nolimits}
\def\abs#1{\vert#1\vert}
\def\Ell{{\mathcal L}}
\def\Order{{\bf O}}
\def\spec{{\rm spec}}
\def\cdt{{\cdot}} % cdot with less space around it
\def\im{\Im m}
\def\co{\mathrm{conv}}
\def\Fou{{\mathcal F}} % Fourier transform
\def\Brill{{\mathbb B}} % Brillouin zone
\def\HH{{\mathcal H}}\def\KK{{\mathcal K}}
\def\Ell{{\mathcal L}}
\def\DD{{\mathcal D}}
\def\expi#1{e^{i#1}}
\def\ratef{I} % rate function
\def\ratefr{\widetilde\ratef}
\def\rateL{R} % pre-Legendre trafo thereof
\def\rateLr{\widetilde\rateL}
\def\level{\Lambda}
\def\spr{\mathop{\rm spr}}
\def\arsinh{\mathop{\rm arcsinh}}
\begin{document}
	
% !TeX encoding = UTF-8
\title{Exponential tail estimates for quantum lattice dynamics}

\author[C. Cedzich]{Christopher Cedzich${}^1$}
\email{\href{mailto:cedzich@hhu.de}{cedzich@hhu.de}}
\address{${}^1$Quantum Technology Group, Heinrich Heine Universit\"at D\"usseldorf, Universit\"atsstr. 1, 40225 D\"usseldorf, Germany}

\author[A. Joye]{Alain Joye${}^2$}
\email{\href{mailto:alain.joye@univ-grenoble-alpes.fr}{alain.joye@univ-grenoble-alpes.fr}}
\address{${}^2$Universit\'e Grenoble Alpes, CNRS Institut Fourier, 38000 Grenoble, France}

\author[A.H. Werner]{Albert H. Werner${}^3$}
\email{\href{mailto:werner@math.ku.dk}{werner@math.ku.dk}}
\address{${}^3${QMATH}, Department of Mathematical Sciences, University of Copenhagen, Universitetsparken 5, 2100 Copenhagen, Denmark}

\author[R.F. Werner]{Reinhard F. Werner${}^4$}
\email{\href{mailto:reinhard.werner@itp.uni-hannover.de}{reinhard.werner@itp.uni-hannover.de}}
\address{${}^4$Institut für Theoretische Physik, Leibniz Universität Hannover, Appelstr. 2, 30167 Hannover, Germany}

\begin{abstract}
	We consider the quantum dynamics of a particle on a lattice for large times. Assuming translation invariance, and either discrete or continuous time parameter, the distribution of the ballistically scaled position $Q(t)/t$ converges weakly to a distribution that is compactly supported in velocity space, essentially the distribution of group velocity in the initial state. We show that the total probability of velocities strictly outside the support of the asymptotic measure goes to zero exponentially with $t$, and we provide a simple method to estimate the exponential rate uniformly in the initial state. Near the boundary of the allowed region the rate function goes to zero like the power $3/2$ of the distance to the boundary.
	The method is illustrated in several examples.
\end{abstract}

\maketitle  % title, affiliations etc.

\section{Introduction}
It is an almost universally accepted view that at very small scales the model of spacetime as a continuum must be modified. Yet, one of the most natural ways of implementing this idea, namely replacing the continuum by a lattice, has been plagued by the somewhat arbitrary breaking of isotropy involved in choosing the lattice axes. One thus needs to explain how at large scales an exactly circular light cone can govern the propagation in a  cubic lattice, a case of inverse symmetry breaking \cite{weinbergGaugeGlobalSymmetries1974}. Moreover, one would like to have an explicit bound on the fuzziness of the cone: The probability for ``faster-than-light'' signals will be non-zero in such a model, but these signals should be weak on a large scale, and decrease rapidly for speeds appreciably larger than the speed of light of the continuum model. In this paper we will provide a technique to get such bounds, namely large deviation type \cite{Ellis} estimates: The total probability for a quantum particle to be found outside the cone, say at any speed larger by $\varepsilon$ than allowed, decreases exponentially in time which is sometimes called ``evanescence'' of the particle's wave function in analogy to optical waves \cite{berryEvanescentRealWaves1994}. We provide a formula for the rate of exponential decrease as a function of $\varepsilon$. Typically, the total probability outside an $\varepsilon$-neighbourhood of the cone goes like $\exp(-c\,\varepsilon^{3/2}t)$.

Effective propagation cones have, of course, been noted before. In the context of quantum walks (discrete time unitary lattice dynamics) they appear in the  asymptotic form of the position $Q(t)$ at large times $t$, scaled as a velocity by considering $Q(t)/t$ \cite{Scudo,timerandom,spacetimerandom}. As in the solid state context (continuous time Hamiltonian lattice dynamics), this scaled asymptotic position converges to the group velocity operator. The spectrum of this operator thus defines the propagation region, which is not necessarily a spherical cone and might not even be convex. Spherical propagation regions are often related to the appearance of ``Dirac points'', certain singular points in the band structure of lattice systems such as graphene. However, the mere convergence of $Q(t)/t$ only allows for the conclusion that the total probability outside the cone goes to zero, not for how fast and not for how the rate depends on the distance from the cone.

Our technique for providing such bounds on the rate of convergence applies to both discrete time and continuous time Hamiltonian dynamics. We assume only translation invariance and that the number of states per unit cell is finite. It is closely related to stationary phase/steepest descent evaluation \cite{debyeNaeherungsformelnFuerZylinderfunktionen1909}. However, it seems more direct, especially if one does not want to estimate the probability amplitudes for individual points, but the total probability for large regions outside the cone.

Our paper is organized as follows: in the next section we describe the setting in which we are working and state the main estimate which is proved in Section \ref{sec:main_proof}. In Section \ref{sec:discussion} we discuss this estimate and the limit cases, and show several properties of the involved quantities. Emblematic examples are discussed in Section \ref{sec:Exmp}.
  % intro
\section{Main result}
\subsection{Setting and statement of the main estimate}
We consider systems, whose configurational variable (for a single particle: the position) is confined to a point lattice $X$ in $s$-dimensional Euclidean space. At each lattice site $x$ we allow some finite dimensional Hilbert space $\KK_x\cong\KK=\Cx^d$ of internal states with $d<\infty$. The overall system Hilbert space is thus $\HH=\ell^2(\Ir^s)\otimes\KK$. We typically write $\psi\in\HH$ as a $\KK$-valued square summable function on $\Ir^s$ so that the $\alpha$-component of the position operator is the multiplication operator $(Q_\alpha\psi)(x)=x_\alpha\psi(x)$.

In such a system we consider a coherent quantum mechanical time evolution, either generated by a Hamiltonian $H$ (the continuous time case) or by a unitary operator $W$ in $\HH$ (the discrete time case). By $W^t$ we denote the evolution after time $t$, i.e., $W^t=\exp(-iHt)$ in the continuous time case and the (integer) matrix power in the discrete time case. In either case we require that $W^t$ commutes with the lattice translations. Thus the dynamics is partially diagonalized by the Fourier transform
\begin{equation}\label{Fourier}
  (\Fou\psi)(p)=\sum_xe^{-ip\cdot x}\ \psi(x)=:\hat\psi(p),
\end{equation}
where $p\in[-\pi,\pi]^s=\Brill$, the Brillouin zone, which is to be considered with periodic boundary conditions, i.e., as the $s$-torus. We equip $\Brill$ with the normalized Haar measure $d^sp/(2\pi)^s$, which makes $\Fou$ unitary from $\HH$ to $\widehat\HH=\Ell^2(\Brill)\otimes\KK$. By translation invariance, $W$ or $H$ become matrix valued multiplication operators, i.e.,
\begin{equation}\label{Wp}
  (\Fou W\Fou^*\ \hat\psi)(p)=W(p)\hat\psi(p),
\end{equation}
and similarly for $H$. Since $\hat\psi$ is $\KK=\Cx^d$-valued, $W(p)$ is a unitary $d\times d$-matrix for every $p\in\Brill$. For example, the ``Hadamard walk'', a standard example of a one-dimensional quantum walk with $\KK=\Cx^2$, has
\begin{equation}\label{hadamard}
  W(p)=\frac1{\sqrt2}\begin{pmatrix}
         e^{ip} & e^{ip}  \\-e^{-ip} & e^{-ip} \end{pmatrix}.
\end{equation}
This is a nearest neighbour dynamics, because only $e^{\pm ip}$ and no higher powers appear. Generally, a walk has maximal jump length $L$ if each entry of $W(p)$ is a trigonometric polynomial of absolute degree $\leq L$. Similar remarks apply to $H(p)$.
We generalize this further by allowing $W(p)$ to be an infinite series. Our standing ``locality'' assumption is then:

\begin{assume}\label{assumption:analytic_ext}
	$W(p)$ has an entire analytic extension from $\Rl^s$ to $\Cx^s$.
\end{assume}

Note that the analogous assumption on $H(p)$ carries over to $W^t=\exp(-itH)$. The growth of $W$ in the imaginary direction is known to express decay properties of the evolution. For example, when $W$ has finite jump length $L$, $W(p)$ is of exponential type, i.e., satisfies a bound $\norm{W^t(p+i\lambda)}\leq c\,\exp(tL\abs\lambda)$, which by the Paley-Wiener theorem implies that an initially localized state $\rho$ is strictly localized in a ball of radius $tL$. Of course, this is anyhow obvious from the meaning of the jump length, but it shows how the growth of the analytic extension relates to propagation. Our main result will refine this connection.

We are interested in the position distribution for large times. Since there is no dissipation, particles will generically spread ballistically, i.e., the scaled quantity $Q(t)/t=(1/t)(W^t)^*QW^t$ is expected to have a limit distribution. For any measurable set $M\subset\Rl^s$ we introduce the probability
\begin{equation}\label{pmdef}
    p_t(\rho,M)=\tr\bigl(\rho\,\chi_M(Q(t)/t)\bigr)
\end{equation}
for finding a value $q\in tM$ in a position measurement at time $t$, starting from the initial state $\rho$. Here $\chi_M$ denotes the indicator function of the set $M$, so $\chi_M(Q)$ is the operator of multiplication by $\chi_M(x_1,\ldots,x_s)$. It turns out that in the described setting the limit $t\to\infty$ of \eqref{pmdef} exists in the weak sense, and all the limit measures have their support in a compact, $\rho$-independent set $\Gamma\subset\Rl^s$, which we call the {\it propagation region} in velocity space \cite{Scudo,timerandom}.  A precise formulation will also be given in Sect.~\ref{sec:propregion}, together with an explicit formula for the limits and proofs.

Our main interest here is the behaviour of the probability measures $p_t$ outside of the propagation region. From the statements given, it is clear that when $M\cap\Gamma=0$, we get $p_t(\rho,M)\to0$ for all $\rho$. But how fast is this convergence? Our main theorem will establish that it is exponential, with a rate increasing with the distance between $\Gamma$ and $M$, i.e.,
\begin{equation}\label{expBound}
    p_t(\rho,M)\leq c(M)\, e^{-t\,\ratef(M)},
\end{equation}
where $\ratef(M)$ is an exponential rate that is independent of the initial state. If $M$ is split into different regions $M=M_1\cup M_2$, estimates of this type are dominated by the smaller rate, i.e., when $\ratef(M_1)<\ratef(M_2)$, we have $p_t(\rho,M_1)\gg p_t(\rho,M_2)$ for large $t$, and hence $\ratef(M)=\ratef(M_1)$. In this sense only the points $x\in M$ with the lowest rate ``$\ratef(x)$'' contribute. We say that a lower semi-continuous function $\ratef:\Rl^n\to\Rl_+\cup\{+\infty\}$ is a {\it rate function} for the process, if
\begin{equation}\label{LDbound}
    \limsup_{t\to\infty} \frac1t\, \log p_t(\rho,M)\leq -\ratef(M)\equiv-\inf_{x\in M}{\ratef(x)}
\end{equation}
for every $\rho$ which initially has compact support, and every closed set $M$. 
We will work with the form \eqref{LDbound}, and regard \eqref{expBound} as a heuristic interpretation. Strictly speaking, this allows a sub-exponential time dependence of the constant, or else \eqref{expBound} is to be read as a bound $\leq c_\varepsilon(M)\exp(-t(I(M)-\varepsilon))$ for every $\varepsilon>0$ with a time-independent constant $c_\varepsilon(M)$.

Estimates of this type belong to the theory of Large Deviations \cite{Ellis}. Typically, in that theory one also shows lower bounds, but since these would be state dependent, we do not consider them here.

With these preparations our estimate can be stated as follows:

\begin{thm}\label{mainprop}
In the setting described above, a rate function $\ratef$ in \eqref{LDbound} can be taken as the Legendre transform
\begin{equation}\label{Legendre}
    \ratef(x)=\sup_\lambda\bigl\{\lambda\cdot x-\rateL(\lambda)\bigr\}
\end{equation}
of a function $\rateL: {\mathbb R}^s \mapsto {\mathbb R}$ that can be computed directly from $H(p)$, respectively $W(p)$, as
\begin{align}\label{formula}
  \rateL(\lambda)&= \sup_{p\in\Brill}\sup\Bigl\{2\im(\omega) \Bigm| \omega  \in\spec\Bigl(H\bigr(p+\tfrac{i\lambda}2\bigr)\Bigr)\Bigr\} , \intertext{respectively}
  \rateL(\lambda)&= \sup_{p\in\Brill}\sup\Bigl\{\log\abs u^2 \Bigm| u \in\spec\Bigl(W\bigr(p+\tfrac{i\lambda}2\bigr)\Bigr)\Bigr\},
  \label{formulaW}
\end{align}
where $\Brill=[-\pi,\pi]^s$ denotes the Brillouin zone.
\end{thm}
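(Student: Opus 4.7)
The plan is to combine an exponential Chebyshev inequality with an exponentially tilted version of the dynamics. For $\lambda\in\Rl^s$ I would introduce the (non-unitary) tilted propagator $W_\lambda:=e^{\lambda\cdot Q/2}\,W\,e^{-\lambda\cdot Q/2}$, so that by telescoping $W_\lambda^t=e^{\lambda\cdot Q/2}W^t e^{-\lambda\cdot Q/2}$ on the dense domain of compactly supported vectors. Since $e^{\lambda\cdot Q}$ acts in momentum representation as the imaginary translation $\hat\psi(p)\mapsto\hat\psi(p+i\lambda)$, Assumption~\ref{assumption:analytic_ext} turns $\Fou W_\lambda\Fou^*$ into a bounded multiplication operator with symbol $W(p+i\lambda/2)$, and hence $\|W_\lambda^t\|=\sup_{p\in\Brill}\|W(p+i\lambda/2)^t\|$; the continuous-time case is parallel, with symbol $e^{-itH(p+i\lambda/2)}$.

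The second step is the exponential Chebyshev bound: for every $\lambda$ the pointwise inequality $\chi_M(y)\leq\exp(\lambda\cdot y-h_M(\lambda))$ with $h_M(\lambda):=\inf_{z\in M}\lambda\cdot z$, combined with $\chi_M(Q(t)/t)=\chi_{tM}(Q(t))$ and the functional calculus of the commuting components of $Q(t)$, lifts to the operator bound $\chi_M(Q(t)/t)\leq e^{\lambda\cdot Q(t)-t\,h_M(\lambda)}$. For a pure initial state $\rho=\kettbra\psi$ with $\psi$ of compact support, the relation $Q(t)=(W^t)^*QW^t$ together with the intertwining identity $e^{\lambda\cdot Q/2}W^t=W_\lambda^t\,e^{\lambda\cdot Q/2}$ yields
\begin{equation*}
p_t(\rho,M)\leq e^{-t\,h_M(\lambda)}\,\|W_\lambda^t\|^2\,\|e^{\lambda\cdot Q/2}\psi\|^2,
\end{equation*}
where the last factor is finite and $t$-independent.

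The third step is to evaluate the growth rate of $\|W_\lambda^t\|$. Gelfand's formula $\|A^t\|^{1/t}\to\spr(A)$ applied pointwise to $A=W(p+i\lambda/2)$, together with Fekete's lemma for the subadditive sequence $t\mapsto\log\sup_p\|W(p+i\lambda/2)^t\|$ and continuity of $W(p+i\lambda/2)$ on the compact torus $\Brill$, allows an exchange of sup and limit and gives $\lim_t\tfrac1t\log\|W_\lambda^t\|=\sup_{p\in\Brill}\log\spr(W(p+i\lambda/2))=\tfrac12\rateL(\lambda)$; the continuous-time formula follows analogously from $|e^{-it\omega}|=e^{t\im\omega}$ for $\omega\in\spec(H(p+i\lambda/2))$. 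Combining these pieces yields the per-$\lambda$ estimate $\limsup_{t\to\infty}\tfrac1t\log p_t(\rho,M)\leq\rateL(\lambda)-h_M(\lambda)$. To upgrade this to the Legendre rate $\ratef(M)=\inf_{x\in M}\sup_\lambda(\lambda\cdot x-\rateL(\lambda))$, I would run a standard covering argument: for each $x\in M$ pick $\lambda_x$ nearly optimal in the sup defining $\ratef(x)$, use continuity of $y\mapsto\lambda_x\cdot y$ to produce a neighbourhood $U_x$ on which $h_{U_x}(\lambda_x)\geq\rateL(\lambda_x)+\ratef(M)-\varepsilon$, extract a finite subcover (first truncating to a large ball when $M$ is unbounded and controlling the exterior by Chebyshev in each coordinate direction, using that $\ratef(x)\to\infty$ as $|x|\to\infty$ by convexity of $\rateL$), and apply a union bound. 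The main technical delicacy is the sup-versus-limit exchange in the Gelfand step, since $W(p+i\lambda/2)$ need not be normal so its finite-time norm can exceed the spectral radius; compactness of $\Brill$ together with joint analyticity in $p$ are what ultimately justify it.
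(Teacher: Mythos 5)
Your overall strategy is the same as the paper's: boost the dynamics by defining $W_\lambda$ through its momentum symbol $W(p+i\lambda/2)$, intertwine it with $e^{\lambda\cdot Q/2}$, apply an exponential Chebyshev bound to $\chi_M(Q(t)/t)$, identify the exponential growth rate of $\norm{W_\lambda^t}$ with $\rateL(\lambda)$ via the spectral radius, and finish with an Ellis-type covering of $M$ by half spaces. The one step that fails as written is the intertwining identity: you justify $W_\lambda^t=e^{\lambda\cdot Q/2}W^te^{-\lambda\cdot Q/2}$ ``by telescoping on the dense domain of compactly supported vectors'', but under Assumption \ref{assumption:analytic_ext} the walk need not have finite jump length, so $W$ does not preserve compact support. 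After a single application of $W$ it is no longer clear that the vector lies in the domain of the unbounded operator $e^{\lambda\cdot Q/2}$, so the intermediate cancellations $e^{-\lambda\cdot Q/2}e^{\lambda\cdot Q/2}$ in the telescoping are not justified. This is precisely why the paper proves Lemma \ref{lem:itwine}: one shows, by analytically continuing the identity $\braket{e^{-i\overline z\cdot Q}\phi}{W\psi}=\braket{\phi}{W_z e^{iz\cdot Q}\psi}$ in the boost parameter $z$ against position-compactly-supported test vectors $\phi$, that $W$ maps $\DD_\Lambda$ into itself and that $e^{\lambda\cdot Q}W\psi=W_{-i\lambda}e^{\lambda\cdot Q}\psi$ there; induction in $t$ then yields the relation you use. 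With that lemma supplied, your pure-state estimate $p_t(\rho,M)\leq e^{-t\,h_M(\lambda)}\norm{W_{-i\lambda/2}^t}^2\norm{e^{\lambda\cdot Q/2}\psi}^2$ is exactly the paper's; you should also add the routine reduction from compactly supported density operators to pure states, since the theorem is stated for general such $\rho$.

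On the Gelfand step, which you correctly flag as delicate: pointwise Gelfand plus Fekete only give the automatic inequality $\inf_t\sup_p\geq\sup_p\inf_t$, which is the wrong direction for the upper bound you need, so ``continuity and compactness'' must actually be deployed. One repair is Dini's theorem along the monotone subsequence $t=2^k$ (using $\norm{A^{2t}}\leq\norm{A^t}^2$ and continuity of $p\mapsto\spr W(p+i\lambda/2)$); the paper's route avoids the exchange altogether by applying Gelfand's formula to the multiplication operator $W_{-i\lambda/2}$ on $\Ell^2(\Brill)\otimes\KK$ itself (subadditivity of $t\mapsto\log\norm{W_{-i\lambda/2}^t}$ gives the limit) and using that the spectrum of a continuous matrix-valued multiplication operator is the closure of $\bigcup_{p}\spec W(p+i\lambda/2)$, whence its spectral radius is $\sup_p\spr W(p+i\lambda/2)$. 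Your covering argument is a harmless variant of the paper's: you handle unbounded closed $M$ by coordinate-direction half spaces outside a large ball, while the paper covers the compact set $(M\cap\overline B)\cup\partial B$ and observes that the complementary polytope lies inside $B$; both work, but note that the reason $\ratef(x)\to\infty$ at infinity (equivalently, that the level sets of $\ratef$ are compact) is not convexity of $\rateL$ but its finiteness, indeed boundedness, on a small sphere around $\lambda=0$.
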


\subsection{Group velocity and propagation region}\label{sec:propregion}
Here we provide short proofs of the general claims made about the limiting velocity distribution. These are known results \cite{Scudo,timerandom}, so our main reason for presenting them here is to introduce the proof method, a variant of which will be central to the proof of Theorem \ref{mainprop}. We begin by diagonalizing $W(p)$ or $H(p)$ for every $p\in\Brill$:
\begin{equation}\label{spectral}
	\begin{aligned}
	  H(p)&=\sum_j\omega_j(p)P_j(p) \\
	  W(p)&=\sum_je^{-i\omega_j(p)}\,P_j(p),
	\end{aligned}
\end{equation}
where the $\omega_j$ determine the eigenvalues ($\omega_j$ is only defined $\bmod2\pi$ in the discrete time case) and the $P_j$ are rank-1 eigenprojections. Obviously, we have to be careful about degeneracies. Unlike in the one-dimensional ($s=1$) case, the analyticity of $W$ does not imply that we can also choose the branch functions $\omega_j$ analytic, not even locally. On the other hand, analyticity near $p$ is easy to see for any nondegenerate branch $\omega_j(p)$: In that case we can set up a Cauchy Resolvent Integral formula around the isolated point $\omega_j(p)$ in the complex plane providing analytic expressions for both $P_j$ and $\omega_j$. The same holds for isolated branches with higher multiplicity, which arise by tensoring with an additional internal space $\KK'$, on which $W$ acts like the identity. Points $(p,\omega_i(p))$ for which such an analytic choice is possible are called {\it regular} in the theory of analytic matrix functions \cite[\S~S3]{Baumgrtl}. It is shown there that the regular points form a connected open submanifold, so the irregular/singular points are of Lebesgue measure zero. Thus the expressions \eqref{spectral} make sense almost everywhere, including small open neighbourhoods. Even degenerate points may be regular, where two bands just form intersecting analytic manifolds. The prototype of a singular point is the tip of a cone, as for example $H(p_1,p_2)=p_1\sigma_1+p_2\sigma_2$ in $\Cx^2$, where $\sigma_1$ and $\sigma_2$ are Pauli matrices. Even when there are no singular points, the bands may wrap around the torus $\Brill$ in a non-trivial way. That is, following a band $\omega_j(p)$ along a closed, but not contractible path one may end up in another branch.

With this information we can introduce the {\it group velocity operator}. This operator also commutes with translations so that
it acts by multiplication with a $p$-dependent matrix. It is the vector operator $V$ whose $\alpha$-coordinate is defined (at every regular point) as
\begin{equation}\label{groupvel}
    V_\alpha(p)=\sum_j\frac{\partial\omega_j(p)}{\partial p_\alpha}\ P_j(p).
\end{equation}
An alternative definition, which does not require diagonalization, but obviously coincides with the above at every regular point is
\begin{equation}\label{groupvelalcont}
    V_\alpha(p)=\lim_{T\rightarrow \infty }\frac{1}{T}\int_0^T e^{itH(p)}\frac{\partial H(p)}{\partial p_\alpha}e^{-itH(p)}dt= \sum_\mu \widetilde P_\mu(p)\frac{\partial H(p)}{\partial p_\alpha}\widetilde P_\mu(p)
\end{equation}
and, for the discrete time case,
\begin{equation}
	V_\alpha(p)=\lim_{T\rightarrow \infty }\frac{2i}{T(T+1)}\sum_{t=0}^TW(p)^{-t}\frac{\partial W(p)^t}{\partial p_\alpha},
\end{equation}
where the $\widetilde P_\mu$ are now the non-degenerate projections of the functional calculus belonging to the distinct eigenvalues.
The key feature is that the components of $V$ commute with $H$ or $W$ and almost everywhere with each other. That is, as matrix-valued multiplication operators on $\Ell^2(\Brill)\otimes\KK$, the $V_\alpha$ do commute and hence have a joint functional calculus, so we can evaluate bounded measurable functions $f:\Rl^d\to\Rl$ on $V$, getting an operator $f(V)$. The spectral projection for $M\subset\Rl^s$ is then $\chi_M(V)$, and the \emph{joint spectrum} $\Gamma$ is the complement of the largest open set $U\subset\Rl^s$ such that $\chi_U(V)=0$. Since $p\mapsto H(p)$ is assumed to be analytic and $\Brill$ is compact, $V$ is also bounded, and hence $\Gamma$ is compact. We can construct almost-eigenvectors of $V(p)$ for eigenvalue tuple $\nabla\omega_j(p)$ at every regular point, so we have that $\nabla\omega_j(p)\in\Gamma$ at every regular point, and since $V$ acts almost everywhere by multiplication with such numbers we have
\begin{equation}\label{specV}
  \Gamma=\{\nabla\omega_j(p)|\,p\in\Rl^s\ \text{regular},\:j=1,\ldots d\}^{\text{closure}}.
\end{equation}

The reason to study these objects, and the justification for calling $\Gamma$ the propagation region, is given in the following Proposition. It is basically well-known, and the reason for calling $V$ the group velocity operator. In the walk context the first statement (apart from worked individual examples) seems to be \cite{Scudo}. The technique there is based on the convergence of moments, which is slightly problematic, because the moments might fail to exist for the initial state and all through the evolution. We give here a proof based on characteristic functions, which appeared in \cite{timerandom} in a much more complex context, which perhaps obscured the structure of the argument. However, it is this basic idea that will be needed later for the proof of Theorem \ref{mainprop}.

\begin{prop}\label{prop:Velo}
For every density operator $\rho$, the probability measure $p_t(\rho,\cdot)$ defined in \eqref{pmdef} converges weakly to the distribution of $V$ in $\rho$. Explicitly:
\begin{equation}\label{limf}
    \lim_{t\to\infty}\tr\rho f(Q(t)/t)=\tr\rho f(V)
\end{equation}
for every continuous function $f:\Rl^s\to\Rl$ vanishing at infinity, with $f$ evaluated on both sides in the respective functional calculus.
\end{prop}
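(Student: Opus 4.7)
The plan is to verify Lévy's continuity criterion: weak convergence of the measures $p_t(\rho,\cdot)$ to the distribution of $V$ in $\rho$ will follow from pointwise convergence of the characteristic functions
\[
\tr\bigl(\rho\,e^{i\lambda\cdot Q(t)/t}\bigr)\ \longrightarrow\ \tr\bigl(\rho\,e^{i\lambda\cdot V}\bigr)\qquad\text{for every }\lambda\in\Rl^s.
\]
The assertion \eqref{limf} for general $f\in C_0(\Rl^s)$ then follows by a standard Fourier / Stone--Weierstrass approximation, using that every operator involved is bounded by $\|f\|_\infty$; a further trace-norm approximation reduces the problem to pure states $\rho=\kettbra\psi$ with $\psi$ of compact support on $\Ir^s$, so that $\hat\psi$ is smooth (even entire) on $\Brill$.

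The key identity is obtained in the Fourier picture. Since $e^{i\lambda\cdot Q/t}$ acts on $\hat\psi$ as the momentum translation $\hat\psi(p)\mapsto\hat\psi(p-\lambda/t)$, while $W^t$ acts by multiplication by $W(p)^t$, the Heisenberg relation $e^{i\lambda\cdot Q(t)/t}=(W^t)^*\,e^{i\lambda\cdot Q/t}\,W^t$ gives
\[
\tr\bigl(\rho\,e^{i\lambda\cdot Q(t)/t}\bigr)=\int_\Brill \bigl\langle\hat\psi(p),\,W(p)^{-t}\,W(p-\lambda/t)^{t}\,\hat\psi(p-\lambda/t)\bigr\rangle\,\frac{d^sp}{(2\pi)^s}.
\]
At a regular $p\in\Brill$ the diagonalization $W(p)=\sum_j e^{-i\omega_j(p)}P_j(p)$ is analytic in a neighbourhood, so $\omega_j(p-\lambda/t)=\omega_j(p)-(\lambda/t)\cdot\nabla\omega_j(p)+O(1/t^2)$ and $P_j(p-\lambda/t)=P_j(p)+O(1/t)$, giving
\[
W(p-\lambda/t)^{t}=\sum_j e^{-it\omega_j(p)}\,e^{i\lambda\cdot\nabla\omega_j(p)}\,P_j(p)+o(1)
\]
uniformly on compact subsets of the regular set. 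Left-multiplying by $W(p)^{-t}=\sum_{j'}e^{it\omega_{j'}(p)}P_{j'}(p)$ and using orthogonality $P_{j'}(p)P_j(p)=\delta_{j'j}P_j(p)$ cancels all oscillating cross phases $e^{it(\omega_{j'}-\omega_j)(p)}$ and yields the pointwise limit
\[
W(p)^{-t}W(p-\lambda/t)^{t}\ \longrightarrow\ \sum_j e^{i\lambda\cdot\nabla\omega_j(p)}P_j(p)=e^{i\lambda\cdot V(p)}
\]
by the definition \eqref{groupvel} of $V$.

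To conclude, I invoke dominated convergence. The integrand is bounded uniformly in $t$ and $p$ by $\|\hat\psi\|_\infty^2$ by unitarity of $W(p)$ (respectively $\|e^{-itH(p)}\|=1$ in the continuous-time case, which proceeds identically with $e^{-it\omega_j(p)}$ replacing the discrete-time spectral factors), and $\hat\psi(p-\lambda/t)\to\hat\psi(p)$ uniformly. Since the singular set has Lebesgue measure zero in $\Brill$, passing to the limit inside the integral produces $\int_\Brill\langle\hat\psi(p),e^{i\lambda\cdot V(p)}\hat\psi(p)\rangle\,d^sp/(2\pi)^s=\tr\bigl(\rho\,e^{i\lambda\cdot V}\bigr)$, as required. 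I expect the only delicate point to be making the $o(1)$ Taylor remainder uniform enough to apply dominated convergence; this is handled by exhausting the open regular set by compacta on which all $\omega_j$ and $P_j$ are smooth and observing that the omitted set can be made arbitrarily small in measure, so the oscillating cross terms that survive there contribute negligibly.
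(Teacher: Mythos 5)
Your argument is correct and follows essentially the same route as the paper: reduction to characteristic functions via L\'evy's continuity theorem, the pointwise limit $W(p)^{-t}W(p-\lambda/t)^t\to e^{i\lambda\cdot V(p)}$ at regular points via the analytic spectral decomposition, and dominated convergence over the Brillouin zone. The only organizational difference is that the paper proves the key step as a strong operator convergence valid for arbitrary $\psi$ (so no reduction to compactly supported states is needed, and no uniformity of the Taylor remainder is required --- pointwise a.e.\ convergence plus the unitary bound already suffices), whereas you first reduce to compactly supported pure states and work with the scalar integral; both are fine.
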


\begin{proof}
The weak convergence is equivalent by the continuity theorem for characteristic functions \cite{Lukacz} (aka. Levy's convergence theorem \cite{levyDeterminationLoisProbabilite1922}\cite[Théorème 17]{levyTheorieAdditionVariables1937}) to the pointwise convergence of characteristic functions, i.e., to the special case of \eqref{limf} for $f(x)=\exp(ix\cdt\lambda)$ for all $\lambda$. We fix one such $\lambda\in\Rl^s$ now.

Exponentials of $Q$ are shift operators in the momentum representation. To utilize this, we introduce the operators
\begin{equation}\label{Wlambda}
  W_\lambda:=e^{i \lambda\cdot Q}We^{-i \lambda\cdot Q}.
\end{equation}
This is an operator that once again commutes with translations. So in the momentum representation on $\widehat\HH=\Ell^2(\Brill)\otimes\KK$ it acts as a matrix multiplication operator, namely
\begin{equation}\label{Wlambdap}
  W_\lambda(p)=W(p-\lambda).
\end{equation}
With the same transformation for $W^t$ we get
\begin{equation}\label{Wtlam}
  e^{i \lambda\cdot Q}W^t e^{-i \lambda\cdot Q}=(W_\lambda)^t=e^{-iH_\lambda t},
\end{equation}
where the second equality is for the continuous time case, with $W^t=\exp(-iHt)$, and $H_\lambda=\exp(i\lambda\cdot Q)H\exp(-i\lambda\cdot Q)$, which multiplies in momentum representation with $H_\lambda(p)=H(p-\lambda)$. It is worth noting that while the operators $W_\lambda$ on the whole space $\HH$ are unitarily conjugate to each other by virtue of \eqref{Wlambda}, the matrix $W_\lambda(p)$ is not conjugate to $W(p)$ and generally these have different spectrum.

We now claim (in either case) the strong limit formula
\begin{equation}\label{claimV}
    \lim_{t\to\infty}W^{-t}W^t_{\lambda/t}=e^{i\lambda\cdot V}.
\end{equation}
For the proof of this claim note first that all operators involved are unitary, so strong convergence is equivalent to weak convergence, i.e., we only have to show the convergence of all matrix elements.  Moreover, all operators in \eqref{claimV} commute with translations, and hence act by multiplication with uniformly bounded matrices in momentum space. By the Dominated Convergence Theorem it therefore suffices to show that
\begin{equation}\label{domcond}
  W^{-t}(p)\,W^t_{\lambda/t}(p)\longrightarrow e^{i\lambda\cdot V(p)}\quad  \text{for almost all}\ p.
\end{equation}
We may therefore assume that $p$ is a regular point, and that in the neighbourhood of $p$ the eigenprojections and eigenvalues are analytic functions. By \eqref{spectral} the left hand side is hence
\begin{equation}\label{prodwwl}
  W(p)^{-t}W_{\lambda/t}(p)^t=\sum_{j,\ell} e^{-it \bigl(\omega_\ell(p-\lambda/t)-\omega_j(t)\bigr)} P_j(p)P_\ell(p-\lambda/t).
\end{equation}
Each of the finitely many terms with $\ell\neq j$ goes to zero, because $P_\ell(p-\lambda/t)\to P_\ell(p)$ as $t\to\infty$, and these operators are multiplied by a bounded function. For the diagonal terms $\ell= j$ the exponent converges to a directional derivative as $t\to\infty$:
\begin{equation}\label{domega}
 -t \bigl(\omega_j(p-\lambda/t)-\omega_j(p)\bigr)\quad\longrightarrow\quad \sum_\alpha \lambda_\alpha \frac{\partial \omega_j(p)}{\partial p_\alpha}.
\end{equation}
Comparing with \eqref{groupvel}, this is precisely the exponent of the corresponding term in $\exp{i\lambda\cdt V}$ which proves \eqref{claimV}.

Now consider an arbitrary pure state density operator $\rho=\kettbra\psi$. Then the characteristic function of its scaled position at time $t$ is
\begin{equation}\label{VpureFint}
 \bigl\langle\psi\bigm|W^{-t}e^{i\lambda\cdot Q/t}W^t\psi\bigr\rangle
  =\bigl\langle\psi\bigm|\bigl(W^{-t}e^{i\lambda\cdot Q/t}W^te^{-i\lambda\cdot Q/t}\bigr)e^{i\lambda\cdot Q/t}\psi\bigr\rangle
   =\bigl\langle\bigl(W^{-t}_{\lambda/t}W^t\bigr)\psi\bigm|e^{i\lambda\cdot Q/t}\psi\bigr\rangle
\end{equation}
According to the claim just proved the first vector goes in norm to $\exp(-i\lambda\cdot V)\psi$ and the second goes to $\psi$ by strong continuity of the one-parameter group generated by $Q$. This proves the proposition for pure states. For mixed states one can use a trace norm approximation by a linear combination of pure states.
\end{proof}

\section{Proof of Theorem \ref{mainprop}}
\label{sec:main_proof}

We consider now the same basic technique as in the proof of Proposition \ref{prop:Velo}, replacing, however, $\lambda$ by $-i\lambda$ for $\lambda\in\Rl^s$. Schematically replacing $\lambda$ by $-i\lambda$ in \eqref{Wlambda} and \eqref{Wlambdap} we get
\begin{align}\label{Wilambda}
  W_{-i\lambda} &= e^{\lambda\cdot Q}W e^{-\lambda\cdot Q},        \\
  \intertext{and}
  \bigl(W_{-i\lambda}\psi\bigr)(p) &= W(p+i\lambda)\psi(p),      \label{Wilambdap}
\end{align}
but we still have to make rigorous sense of these identites.
In \eqref{Wilambdap} the right hand side contains the analytically continued $W(p)$, which exists by Assumption \ref{assumption:analytic_ext}. Hence $W_{i\lambda}$ can and will be defined by this formula. The $p$-dependent matrix it multiplies with is continuous, and since $\Brill$ is compact,  $W(p+i\lambda)$ is uniformly norm bounded for $p\in\Brill$. Hence $W_{i\lambda}$ is a bounded operator, and the family $W_z$ for $z\in\Cx$ is a norm analytic family of bounded operators. However, to make good use of this analytic extension we also need to give an interpretation of the operator product in \eqref{Wilambda}. This is done in the following Lemma.

\begin{lem}\label{lem:itwine}
Let $\Lambda>0$ and denote by $\DD_\Lambda$ the intersection of the domains of the selfadjoint operators $e^{\lambda\cdot Q}$ with $\norm\lambda\leq\Lambda$.   Then, for $\psi\in\DD_\Lambda$ we have $W\psi\in\DD_\Lambda$, and for $\norm\lambda<\Lambda$
\begin{equation}\label{itwine}
  e^{\lambda\cdot Q}W\psi=W_{-i\lambda}e^{\lambda\cdot Q}\psi.
\end{equation}
\end{lem}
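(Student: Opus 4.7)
The plan is to work in the position representation, where $W$ acts as convolution by the matrix-valued Fourier coefficients
\[
\hat W(z) = (2\pi)^{-s}\int_\Brill e^{ip\cdot z}W(p)\,d^sp,\qquad z\in\Ir^s,
\]
and to read \eqref{itwine} as an equality between two convolution expressions in position space. The first step is to extract decay of $\hat W$ from Assumption \ref{assumption:analytic_ext}: since $p\mapsto W(p)$ is entire, periodic, and $\Brill$ is compact, Cauchy's theorem permits shifting the integration contour, giving for every $\mu\in\Rl^s$
\[
\hat W(z) = e^{-\mu\cdot z}(2\pi)^{-s}\int_\Brill e^{ip\cdot z}W(p+i\mu)\,d^sp,
\]
so that $\norm{\hat W(z)}\le e^{-\mu\cdot z}\sup_{p\in\Brill}\norm{W(p+i\mu)}$. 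Choosing the components of $\mu$ anti-aligned with those of $z$ and of arbitrarily large norm shows that $\norm{\hat W(z)}$ decays faster than any exponential, so $z\mapsto e^{\lambda\cdot z}\hat W(z)$ is absolutely summable on $\Ir^s$ for every $\lambda\in\Rl^s$.

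The second step is $\DD_\Lambda$-invariance. For $\psi\in\DD_\Lambda$ and $\norm\lambda\le\Lambda$, the vector $y\mapsto e^{\lambda\cdot y}\psi(y)$ lies in $\ell^2(\Ir^s)\otimes\KK$ by definition of the domain. Multiplying $(W\psi)(x)=\sum_y\hat W(x-y)\psi(y)$ by $e^{\lambda\cdot x}$ yields
\[
\bigl(e^{\lambda\cdot Q}W\psi\bigr)(x) = \sum_y e^{\lambda\cdot(x-y)}\hat W(x-y)\,e^{\lambda\cdot y}\psi(y),
\]
which exhibits $e^{\lambda\cdot Q}W\psi$ as the convolution of an $\ell^2$-vector with an $\ell^1$-kernel. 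Young's inequality then gives $e^{\lambda\cdot Q}W\psi\in\HH$, whence $W\psi\in\DD_\Lambda$.

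The third step is to match this convolution with $W_{-i\lambda}$, defined in \eqref{Wilambdap} as multiplication by $W(p+i\lambda)$ on $\widehat\HH$. Applying the contour shift from the first step with $\mu=-\lambda$ gives
\[
(2\pi)^{-s}\int_\Brill e^{ip\cdot z}W(p+i\lambda)\,d^sp = e^{\lambda\cdot z}\hat W(z),
\]
so the position-space kernel of $W_{-i\lambda}$ is precisely $e^{\lambda\cdot z}\hat W(z)$. For $\norm\lambda<\Lambda$ the vector $e^{\lambda\cdot Q}\psi$ lies in $\ell^2$ and the kernel is in $\ell^1$, so all sums converge absolutely and the display in the previous paragraph coincides with $\bigl(W_{-i\lambda}\,e^{\lambda\cdot Q}\psi\bigr)(x)$, yielding \eqref{itwine}.

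The one step requiring care is the contour deformation, which I would perform one coordinate at a time: periodicity of $W$ in each real variable cancels the contributions from the lateral faces of the shifted rectangles, so entire analyticity of $W$ is enough to justify each shift. Everything else — the exchange of sum and integral, Young's inequality, and the pointwise identification of convolutions — is routine given the super-exponential decay of $\hat W$ established in the first step.
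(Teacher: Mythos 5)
Your argument is correct, but it proves the lemma by a genuinely different route than the paper. You work in the position representation: entire analyticity together with periodicity of $W(p)$ gives, by shifting the contour one coordinate at a time, $\norm{\hat W(z)}\leq e^{-\mu\cdot z}\sup_{p\in\Brill}\norm{W(p+i\mu)}$ for every $\mu$, hence decay of the Fourier coefficients faster than any exponential, so that $z\mapsto e^{\lambda\cdot z}\hat W(z)$ is absolutely summable; Young's inequality then yields $W\psi\in\DD_\Lambda$, and identifying the position kernel of $W_{-i\lambda}$ as $e^{\lambda\cdot z}\hat W(z)$ turns \eqref{itwine} into a pointwise identity between absolutely convergent convolutions. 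The paper instead argues abstractly: for $\psi\in\DD_\Lambda$ the map $z\mapsto e^{iz\cdot Q}\psi$ is analytic in the strip $\norm{\im z}<\Lambda$, the identity $\braket{e^{-i\overline z\cdot Q}\phi}{W\psi}=\braket{\phi}{W_z e^{iz\cdot Q}\psi}$ is continued analytically from real $z$ to $z=-i\lambda$ for $\phi$ compactly supported, and since such $\phi$ form a core of $e^{\lambda\cdot Q}$ one concludes $W\psi\in\dom\bigl((e^{\lambda\cdot Q})^*\bigr)=\dom\bigl(e^{\lambda\cdot Q}\bigr)$ together with \eqref{itwine}, with no mention of kernels. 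Your version is more elementary and makes the mechanism explicit — it even gives the quantitative bound $\norm{e^{\lambda\cdot Q}W\psi}\leq\bigl(\sum_z e^{\lambda\cdot z}\norm{\hat W(z)}\bigr)\norm{e^{\lambda\cdot Q}\psi}$ — at the cost of invoking the (routine) identification of a bounded multiplier with summable Fourier coefficients as an $\ell^1$ convolution operator; the paper's duality/core argument avoids kernels entirely and stays within the analytic-continuation language used throughout Section~\ref{sec:main_proof}. Two bookkeeping slips, neither of which affects your displayed identities or conclusions: to get decay from $e^{-\mu\cdot z}$ you must choose $\mu$ aligned with $z$ (so that $\mu\cdot z>0$), not anti-aligned, and the kernel identity for $W_{-i\lambda}$ follows from the shift with $\mu=+\lambda$, not $\mu=-\lambda$.
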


\begin{proof}
For $\psi\in\DD_\Lambda$, the function $z\mapsto e^{iz\cdot Q}\psi$ can be differentiated in norm for $\norm{\im z}<\Lambda$, and has continuous boundary values on this ``strip''.  Consider now a vector $\phi$, which is compactly supported in position, so that, in particular,
$\phi\in\DD_\Lambda$.
Then consider the equation for complex $z\in\Cx^s$:
\begin{equation}\label{Wila}
  \braket{e^{-i\overline{z}\cdot Q}\phi}{W\psi}=\braket{\phi}{W_z e^{iz\cdot Q}\psi}.
\end{equation}
This holds for $z\in\Rl^s$ by \eqref{Wlambda}. Hence it can be extended to all $z$ for which both sides are analytic. The vector $e^{-i\overline{z}\cdot Q}\phi$ is everywhere anti-analytic, so the left hand side is entire. For the right hand side, we just established that the vector $e^{iz\cdot Q}\psi$ is analytic in the strip $\norm{\im z}<\Lambda$, and $W_z$ is an entire analytic family by assumption. The equality furthermore extends to the boundary values, in particular to any $z=-i\lambda$ with $\lambda\in\Rl^s$ and $\norm\lambda\leq\Lambda$. Thus, for $\phi$ in a core of the selfadjoint operator  $A=e^{\lambda\cdot Q}$ we have that $\braket{A\phi}{W\psi}=\braket{\phi}{W_{-i\lambda}A\psi}$. Hence $W\psi\in\dom(A^*)=\dom(A)$, and $AW\psi= W_{-i\lambda}A\psi$.
\end{proof}

The core of the proof of Theorem \ref{mainprop} is the following exponential estimate for the expectation of $\exp(\lambda\cdt Q)$:
\begin{lem}
Let $\rho$ be a density operator and $\Lambda>0$ such that $\abs{\tr\rho\exp(\lambda\cdt Q)}<\infty$ for $\lambda\in\Rl^s$ with $\norm\lambda\leq\Lambda$. Then, for such $\lambda$,
\begin{equation}\label{R-estimate}
 \limsup_{t\to\infty}\frac1t\log \tr\rho\, e^{\lambda\cdt Q(t)} \leq R(\lambda) ,
\end{equation}
where $R$ is the function given in Theorem \ref{mainprop}.
\end{lem}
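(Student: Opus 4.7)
The strategy is to transfer the unbounded weight $e^{\lambda\cdot Q}$ across the unitary evolution using the intertwining of Lemma \ref{lem:itwine}, thereby trading the Heisenberg-evolved weight $e^{\lambda\cdot Q(t)}$ for the operator norm of a bounded but non-unitary Fourier multiplier, whose exponential growth will turn out to be precisely the spectral quantity defining $R(\lambda)$.

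First, I iterate Lemma \ref{lem:itwine}, applied with argument $\lambda/2$, to obtain
\begin{equation*}
e^{(\lambda/2)\cdot Q}\,W^{t}\psi=W^{t}_{-i\lambda/2}\,e^{(\lambda/2)\cdot Q}\psi
\end{equation*}
for every $\psi\in\DD_\Lambda$ and every $t\geq 0$; in the discrete-time case this is immediate by induction, while in the continuous-time case the proof of Lemma \ref{lem:itwine} runs verbatim with $W$ replaced by $e^{-iHs}$, using that $H(p)$ is entire and $\Brill$ is compact, so that $z\mapsto e^{-iH_{z}s}$ is a norm-analytic family of bounded operators. Under the hypothesis $\tr\rho\,e^{\lambda\cdot Q}<\infty$, the operator
\begin{equation*}
\sigma:=e^{(\lambda/2)\cdot Q}\rho\,e^{(\lambda/2)\cdot Q}=\bigl(e^{(\lambda/2)\cdot Q}\rho^{1/2}\bigr)\bigl(e^{(\lambda/2)\cdot Q}\rho^{1/2}\bigr)^{*}
\end{equation*}
is positive and trace-class with $\|\sigma\|_{1}=\tr\rho\,e^{\lambda\cdot Q}$. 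Splitting $e^{\lambda\cdot Q}=e^{(\lambda/2)\cdot Q}e^{(\lambda/2)\cdot Q}$, applying the intertwining and its adjoint version $(W^{t})^{*}e^{(\lambda/2)\cdot Q}=e^{(\lambda/2)\cdot Q}(W^{t}_{-i\lambda/2})^{*}$, and exploiting cyclicity of the trace then yields the key identity
\begin{equation*}
\tr\rho\,e^{\lambda\cdot Q(t)}=\tr\rho\,(W^{t})^{*}e^{\lambda\cdot Q}W^{t}=\tr\sigma\,(W^{t}_{-i\lambda/2})^{*}W^{t}_{-i\lambda/2},
\end{equation*}
whence H\"older's inequality for the trace gives the clean bound $\tr\rho\,e^{\lambda\cdot Q(t)}\leq\|\sigma\|_{1}\,\|W^{t}_{-i\lambda/2}\|^{2}$.

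The remainder is purely spectral: after taking logarithms and dividing by $t$, the claim reduces to $\limsup_{t\to\infty}\tfrac{2}{t}\log\|W^{t}_{-i\lambda/2}\|\leq R(\lambda)$. Since $W(p+i\lambda/2)$ is continuous on the compact Brillouin zone, $W_{-i\lambda/2}$ is a bounded matrix-valued multiplication operator on $\widehat\HH=\Ell^{2}(\Brill)\otimes\KK$, and Gelfand's formula gives $\|W^{t}_{-i\lambda/2}\|^{1/t}\to\spr(W_{-i\lambda/2})$. Because the spectrum of such a multiplier coincides with the closure of $\bigcup_{p\in\Brill}\spec W(p+i\lambda/2)$, one reads off $2\log\spr(W_{-i\lambda/2})=R(\lambda)$, matching \eqref{formulaW}. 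The continuous-time case is identical with $W^{t}_{-i\lambda/2}=e^{-iH_{-i\lambda/2}t}$: since $H(p+i\lambda/2)$ is uniformly bounded on $\Brill$, $-iH_{-i\lambda/2}$ is a bounded generator and the growth bound of the uniformly continuous semigroup equals the spectral bound $\sup\{\im\omega:\omega\in\spec H_{-i\lambda/2}\}$, yielding \eqref{formula}. The main technical hurdle I anticipate is the rigorous justification of the cyclic rearrangement above: it requires promoting the formal identity $e^{(\lambda/2)\cdot Q}W^{t}e^{-(\lambda/2)\cdot Q}=W^{t}_{-i\lambda/2}$, valid a priori only on the dense core $\DD_\Lambda$, to a bounded-operator identity on all of $\HH$ by density, after which all remaining manipulations take place safely inside the trace-class--bounded-operator pairing.
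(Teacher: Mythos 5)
Your proposal is correct and follows essentially the same route as the paper: intertwine $e^{(\lambda/2)\cdot Q}$ with $W^t$ via Lemma \ref{lem:itwine}, bound the resulting expression by $\norm{W_{-i\lambda/2}^t}^2$, and identify the exponential growth rate with the spectral radius (discrete time) or spectral bound via the spectral mapping theorem (continuous time) of the matrix multiplication operator, which gives $R(\lambda)$ as in \eqref{formulaW} and \eqref{formula}. The only difference is cosmetic: you handle mixed states directly through the trace-class operator $\sigma=e^{(\lambda/2)\cdot Q}\rho\,e^{(\lambda/2)\cdot Q}$ and H\"older's inequality, whereas the paper reduces to pure states by the spectral resolution of $\rho$ before applying the same norm estimate.
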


Note that here we need a condition on the initial distribution, whereas in Propopsition \ref{prop:Velo} this is missing: The initial distribution is absorbed by the ballistic scaling. The condition is, of course, satisfied for any strictly localized state, but one can also find states, say, with power law decay, for which the expectation under the limit is constant equal to $+\infty$.

\begin{proof}
Note first that the condition is only about the initial distribution of $\rho$. In this condition $\tr\rho\exp(\lambda\cdt Q)$ is taken as always defined, but possibly infinite, by integrating the $\rho$-expectation of the spectral measure of $Q$ against the exponential function. We argue first, that it suffices to prove the Lemma for pure states $\rho=\kettbra\psi$. Indeed, if the assumption holds for $\rho$ it also holds for  every convex component of $\rho$, i.e., every vector $\psi$ such that $\rho\geq c\kettbra\psi$ with $c>0$. Furthermore, if the bound holds for some $\rho_i$'s it also holds for finite sums $\rho^N=\sum_i^N\rho_i$, where we can ignore normalization factors, which are scaled away on the left hand side with $1/t$. Moreover, we can transfer to a norm convergent sum $\rho=\sum_i^\infty\rho_i$, because $\tr\rho\exp(\lambda\cdt Q(t))$ is the supremum of the corresponding expressions using the partial sums $\rho^N$, and $R$ on the right hand side is independent of $\rho$. Hence the spectral resolution of $\rho$ reduces the proof to the pure state case.

For $\rho=\kettbra\psi$, the premise is just $\psi\in\DD_{\Lambda/2}$ with $\DD_{\Lambda/2}$ defined as in Lemma~\ref{lem:itwine}. That Lemma was stated in such a way that its condition propagates with $t$, i.e., it implies (by induction) also that $W^t\psi\in\DD_{\Lambda/2}$, and $e^{\lambda\cdot Q}W^t\psi=W_{-i\lambda}^te^{\lambda\cdot Q}\psi$. But then we find
\begin{align}\label{estpure}
  \tr\rho\, e^{\lambda\cdt Q(t)}&=\norm{e^{\lambda\cdt Q/2}W^t\psi}^2\nonumber\\
      &=\norm{W_{-i\lambda/2}^t\,e^{\lambda\cdt Q/2}\psi}^2\nonumber\\
      &\leq\norm{W_{-i\lambda/2}^t}^2\,\norm{e^{\lambda\cdt Q/2}\psi}^2
\end{align}
Hence taking the logarithm and dividing by $t$, we get that
\begin{equation}\label{R-estimate2}
 \limsup_{t\to\infty}\frac1t\log \tr\rho\, e^{\lambda\cdt Q(t)} \leq \limsup_{t\to\infty}\frac1t\log\norm{W_{-i\lambda/2}^t}^2.
\end{equation}
In fact, on the right hand side the limit (not just the limit superior) exists, because $t\mapsto \log\norm{A^t}$ is a subadditive function.  
For discrete time, $\lim_t\norm{A^t}^{1/t}$ is defined as the spectral radius of $A$, so in \eqref{R-estimate2} we get twice the logarithm of the spectral radius of $W_{-i\lambda/2}$. In continuous time, the limit along $t\in\Nl$ can similarly be evaluated, and by the spectral mapping theorem gives $2\sup\im\:\spec H_{-i\lambda/2}$. The same holds along $t_0\Nl$ for any $t_0$, so this is the limit. In either case, the operators involved are matrix multiplication operators, so the spectrum is the closure of the union over $p\in\Brill$ of the spectra of $W_{-i\lambda/2}(p)=W(p+i\lambda/2)$, and similarly for $H$.  This spectral supremum is the function $R$ given in Theorem~\ref{mainprop}.
\end{proof}

Our next task is to turn this into an estimate of probabilities of sets $M$. We begin with half spaces
\begin{equation}\label{halfspace}
    M=E_\geq(c,\lambda)=\{x\mid \lambda\cdot x\geq c\},
\end{equation}
for some constants $\lambda\in\Rl^s$ and $c\in\Rl$.
Then, for all $t\geq 0$, the indicator function of $M$ is everywhere $\leq\exp(t\lambda\cdot x-ct)$.
Therefore, by the functional calculus for the self-adjoint operator $Q(t)$,
\begin{equation}\label{ilexp}
    \chi_M(Q(t)/t)\leq e^{\lambda\cdot Q(t)-ct}
\end{equation}
and with \eqref{pmdef}:
\begin{equation*}
     p_t(\rho,M)
        \leq e^{-ct}\ \tr\rho\, e^{\lambda\cdot Q(t)}.
\end{equation*}
Combining with \eqref{R-estimate}, we can summarize this as 
\begin{equation}\label{est1}
     \limsup_{t\to\infty} \frac1t\, \log p_t(\rho,E_\geq(c,\lambda))
        \leq-c +\rateL(\lambda).
\end{equation}

Our task in the remainder of the proof will be to extend this from half spaces to more general sets, for which we largely follow the proof of \cite[Lemma VII.4.1.]{Ellis}. Throughout we will fix some $a>0$, to be thought of as a reference rate, and ask for which sets $M$ the probability decreases at least with rate $-a$, i.e., $\limsup_t t^{-1}\log p_t(\rho,M)\leq-a$.
The largest half spaces for which \eqref{est1} guarantees such decrease are 
\begin{equation}\label{Egeq}
  E_\geq(\lambda):=E_\geq(a+R(\lambda),\lambda),
\end{equation}
where we suppress the dependence on the fixed $a$ in the notation. It will be convenient to consider also the open half space $E_{>}(\lambda)=\{x\mid \lambda\cdot x>a+\rateL(\lambda)\}$, and their complements $E_{\leq}(\lambda)$. 

The first extension is to sets $M$ which are contained in a finite union of open half spaces,
\begin{equation}\label{finUnion}
  M\subset\bigcup_{i=1}^n E_{>}(\lambda_i).
\end{equation}
Indeed, in this case
\begin{equation*}
  p_t(\rho,M)\leq n \max_ip_t\bigl(\rho,E_\geq(a+\rateL(\lambda_i),\lambda_i)\bigr),
\end{equation*}
and hence
\begin{equation}\label{Mbound}
 \limsup_{t\to\infty} \frac1t\, \log p_t(\rho,M)\leq -a.
\end{equation}

Let us denote by $E_{>}=\bigcup_\lambda E_{>}(\lambda)$ the union of {\it all} open half spaces (for fixed $a$). Its complement is the \emph{level set} $\level(a)$ defined as
\begin{align}\label{E>c}
    (E_{>})^c&=\bigcap_\lambda\{x\mid \lambda\cdot x-\rateL(\lambda)\leq a\} \nonumber\\
         &=\{x\mid \sup_\lambda(\lambda\cdot x-\rateL(\lambda))\leq a\} \nonumber\\
        &=\{x\mid \ratef(x)\leq a\}=:\level(a)
\end{align}
of the rate function $\ratef$ defined by \eqref{Legendre}. We claim that for every closed set $M\subset E_{>}$ we can find a finite collection of $\lambda_i$ satisfying \eqref{finUnion}. This is evident for compact $M$, because we may choose a finite subcover of the cover $M\subset\bigcup_\lambda E_{>}(\lambda)$.

For general closed $M$ we follow a construction of \cite[Lemma~VII.4.1]{Ellis} exploiting that in a union of half spaces infinity is anyhow well covered. More formally, let $B$ be an open ball containing $\level(a)$, and denote its closure and boundary by $\overline B$ and $\partial B$ (see Figure \ref{fig:polygon}). Consider the set
\begin{equation}\label{Mtilde}
    \widetilde M=(M\cap\overline B)\cup \partial B.
\end{equation}
%%%%%
\begin{figure}[ht]
    \centering
  \includegraphics[width=0.3\textwidth]{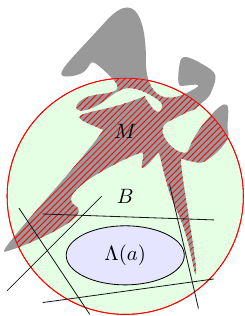}
  \caption{Subsets needed in the proof of the basic estimate. The red hatched patch is $\widetilde M$ from \eqref{Mtilde}. The straight lines indicate the half spaces characterized by the $\lambda_i$.}
  \label{fig:polygon}
\end{figure}
%%%%%
This is compact as the union of two compact sets, so we can find finitely many $\lambda_i$ with $\widetilde M\subset\bigcup_{i=1}^n E_{>}(\lambda_i)$. The complement $P=\bigcap_iE_{\leq}(\lambda_i)$ is a finite intersection of half spaces, i.e., a polytope. Since $\partial B$ is in the complement of $P$, $P$ is entirely contained in $B$. Hence $B^c\subset \bigcup_{i=1}^n E_{>}(\lambda_i)$, and since
$M\subset\widetilde M\cup B^c$, we have also found the desired covering \eqref{finUnion}. To conclude: \eqref{Mbound} holds for every closed subset $M\subset E_{>}$.

To conclude the proof of Theorem \ref{mainprop}, let $M$ be an arbitrary closed set and $I(M):=\inf_{x\in M}\ratef(x)$. As we argue in the next section, $\ratef$ is lower semi-continuous with compact level sets and hence this infimum is actually attained on $M$. Since for $\ratef(M)=0$ the bound in \eqref{LDbound} is trivial, we assume $0<\ratef(M)\leq\infty$. Consider any $a<\ratef(M)$, which for $I(M)=\infty$ just means any positive number. The level set $\level(a)$ is disjoint from $M$, and so $M\subset E_{>}$. By the previous paragraph this implies \eqref{Mbound}. Since $a<\ratef(M)$ was arbitrary we finally get the bound \eqref{LDbound}.

  % setup, main result
\section{Discussion of the bound and general properties}\label{sec:discussion}
\subsection{Elementary properties}
\ititem1{$0\leq\ratef(x)\leq\infty$}
Obviously,  by \eqref{formula} and \eqref{formulaW}, $\rateL(0)=0$. Hence we can put $\lambda=0$ in \eqref{Legendre} to get $\ratef(x)\geq0$. The rate function can be infinite, which indicates superexponential decay. A good example is given also in Section \ref{sec:largeX}, where in some velocity region the probability vanishes exactly, and $I=\infty$.

\ititem2{$\ratef$ is convex and lower semicontinuous}
Convexity and lower semicontinuity are clear for any Legendre transform, and more generally for any pointwise supremum of continuous affine functions.

\ititem3{$I(x)=0$ for $x\in\co\Gamma$, the closed convex hull of the propagation region.}
Indeed, if we have a set such that $p_t(\rho,M)\geq q>0$ for some $\rho$ and large $t$, the left hand side of \eqref{LDbound} has a lower bound $(\log q)/t)$, which goes to zero, implying $\inf_{x\in M}\ratef(x)=0$. Since $\ratef$ is lower semicontinuous, this implies that $I(x)$ vanishes on $\Gamma$. Since the rate function $I$ is convex, its level sets and in particular the zero set is also convex, so contains $\co\Gamma$. The propagation region may well fail to be convex, in which case there is a region for which our bound has no exponential rate prediction, although the probability does go to zero. An example is provided in Section \ref{sec:non_conv} below.

\ititem4{$\rateL$ is continuous}
Consider the function $(p,\lambda)\mapsto \spr W(p+i\lambda)$, where `$\spr$' denotes the spectral radius. By our standing assumption, $W(p)$ is entire analytic, so jointly continuous. Moreover, the spectral radius in a fixed finite matrix dimension is continuous, because the coefficients of the characteristic polynomial are, and the zeros of a polynomial are continuous functions of the coefficients (a standard result; see e.g.,  \cite{HarrisMartin}). The continuity of $R$ thus follows from the purely topological Lemma that for any jointly continuous function $f:X\times Y\to \Rl$ with $Y$ compact, the partial supremum $g(x)=\max_{y\in Y}f(x,y)$ is continuous \cite{WWong}. For completeness, we provide the lemma in Appendix \ref{app:sup_cont}.

\ititem5{$\ratef$ has compact lower level sets}
The lower level sets
\begin{equation}\label{levelset}
    \level(a)=\{x\mid \ratef(x)\leq a\}
\end{equation}
defined in \eqref{E>c} are closed for all $a\geq0$ because $\ratef$ is lower semicontinuous. To show their boundedness we only need one consequence of the continuity of $\rateL$, namely that $\rateL$ is bounded on any sphere of sufficiently small radius $\varepsilon$.  Suppose $\abs{R(\lambda)}\leq c$ for $\abs\lambda=\varepsilon$. Then, for $x\in\Lambda(a)$, we have $\lambda\cdot x-R(\lambda)\leq a$ for all $\lambda$, and hence
$\lambda\cdot x\leq(a+c)$ for $\abs\lambda=\varepsilon$. Hence $\level(a)$ is contained in a ball of radius $(a+c)/\varepsilon$.

\subsection{Large $x$ and relation to trivial propagation bounds}\label{sec:largeX}

Consider first the walk case, i.e., the discrete time evolution with strictly finite jumps, characterized by $W(p)=\sum_{y\in F}W_y\expi{p\cdot y}$, where $F=\{y\mid W_y\neq0\}$ is the finite {\it set of possible jumps}.
\begin{lem}\label{lem:Iinf}
We have $I(x)=\infty$ for all $x$ outside the convex hull of the possible jumps $\co F$.
\end{lem}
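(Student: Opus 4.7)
The plan is to exploit the finiteness of $F$ together with the separating hyperplane theorem. Since $x \notin \co F$ and $\co F$ is a closed convex set, there exists $\lambda_0 \in \Rl^s$ such that
\[
  \delta := \min_{y \in F} \lambda_0 \cdot (x - y) > 0.
\]
It then suffices to show $\lambda \cdot x - R(\lambda) \to +\infty$ along the ray $\lambda = t\lambda_0$, $t \to \infty$, because $I(x) = \sup_\lambda \{\lambda \cdot x - R(\lambda)\}$.

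The key estimate comes from the formula in \eqref{formulaW} combined with the trivial bound $\mathrm{spr}(A) \leq \|A\|$. Using $W(p) = \sum_{y \in F} W_y e^{-ip \cdot y}$ (which is legitimate because the sum is finite, so analyticity is automatic), I compute
\[
  W\bigl(p + \tfrac{i\lambda}{2}\bigr) = \sum_{y \in F} W_y\, e^{-ip \cdot y}\, e^{\lambda \cdot y / 2},
\]
whence, uniformly in $p \in \Brill$,
\[
  \bigl\| W(p + i\lambda/2) \bigr\| \leq \sum_{y \in F} \|W_y\|\, e^{\lambda \cdot y / 2} \leq C\, e^{\,\max_{y \in F} (\lambda \cdot y)/2},
\]
where $C = |F| \cdot \max_{y \in F}\|W_y\|$ is a finite constant. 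Taking logs and doubling gives
\[
  R(\lambda) \leq 2\log C + \max_{y \in F} \lambda \cdot y.
\]

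Substituting $\lambda = t \lambda_0$ yields
\[
  t\lambda_0 \cdot x - R(t\lambda_0) \geq t\lambda_0 \cdot x - t\max_{y \in F} \lambda_0 \cdot y - 2\log C = t\,\delta - 2\log C,
\]
which tends to $+\infty$ as $t \to \infty$. Therefore $I(x) = \sup_\lambda\{\lambda\cdot x - R(\lambda)\} = \infty$, as claimed.

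The only thing one has to be careful about is the routine check that $\|\cdot\|$ really dominates $\spec$ entrywise (so that the sup over $u \in \spec W(p+i\lambda/2)$ in \eqref{formulaW} is controlled by the operator norm), and that the separation $\delta > 0$ can be chosen strictly positive for any $x \notin \co F$; both are standard. No genuine obstacle arises beyond these observations — the finiteness of $F$ makes the analytic continuation trivially controlled, and the rest is just the separating hyperplane theorem.
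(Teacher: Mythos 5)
Your proof is correct and follows essentially the same route as the paper: the same triangle-inequality/norm bound on the analytically continued symbol gives $R(\lambda)\le \mathrm{const}+\sup_{y\in F}\lambda\cdot y$, and your explicit separating-hyperplane ray argument is just an unpacking of the paper's observation that the Legendre transform of the support function $R_F$ is the convex indicator of $\co F$. The only cosmetic differences are the constant $|F|\max_{y}\|W_y\|$ versus $\sum_{y}\|W_y\|$ and the sign convention in the Fourier series for $W(p)$, neither of which affects the argument.
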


\begin{proof}  Introduce the function $\rateL_F(\lambda)= \sup\{\lambda\cdot x|x\in F\}$. Its Legendre transform is
\begin{equation}\label{rategauge}
   \ratef_F(x) = \begin{cases}0& x\in\co F\\\infty &\text{otherwise}.\end{cases}
\end{equation}
With $c=\log\sum_y\norm{W_y}$ we have
\begin{equation}\nonumber
    \spr W(p+i\lambda)\leq\norm{W(p+i\lambda)}\leq \Bigl(\sum_y\norm{W_y}\Bigr)\ \exp(\rateL_F(\lambda))
                        \equiv\exp\bigl(c+\rateL_F(\lambda)\bigr),
\end{equation}
and, with \eqref{formulaW} and \eqref{Legendre},
\begin{align*}
   \rateL(\lambda)\leq 2(c+\rateL_F(\lambda/2)),\qquad
    \ratef(x)\geq 2\ratef_F(x)-c=\infty,
\end{align*}
whenever $x\notin\co F$.
\end{proof}

The continuous time analogue of the last bound is somewhat similar to Lieb-Robinson bounds in statistical mechanics \cite{liebFiniteGroupVelocity1972,nachtergaeleLiebRobinsonBoundsQuantum2010,naaijkensQuantumSpinSystems2017}: One needs only rough information about the size of the Hamiltonian terms to get propagation in a cone (up to exponential tails). We describe one simple version. To state it, we assume without loss that the origin lies in the interior of $\co F$ with $F=\{y\mid H_y\neq0\}$, where $H(p)=\sum_{y\in F}H_ye^{ip\cdot y}$. (If this is not the case, add ${\idty}$ at the origin.)
The so-called ``gauge functional'' of this set is
\begin{equation}\label{gaugefct}
    g(x)=\inf\{\alpha\in \mathbb R^+_*\mid x\in\alpha\, \co F\}
\end{equation}
This is like a norm, but with a ``unit ball'' which need not be symmetric around $0$. Then
\begin{lem}
With $C=\sum_y\norm{H_y}$,
\begin{equation}\label{LiebR}
    \ratef(x)\geq 2 g(x)\,\log\frac{g(x)}{eC}.
\end{equation}
\end{lem}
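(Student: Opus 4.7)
My plan is to upper-bound $\rateL(\lambda)$ directly from the sum representation $H(p)=\sum_{y\in F}H_y\,e^{ip\cdot y}$, and then take the Legendre transform of that bound as a lower bound on $\ratef$. Hermiticity of $H$ forces $H_{-y}=H_y^*$, so $F=-F$ and the convex body $K:=\co F$ is centrally symmetric; consequently its support function $h_K(\nu):=\sup_{y\in K}\nu\cdot y$ is even: $h_K(-\nu)=h_K(\nu)$.

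The first step is the crude norm estimate
\begin{equation*}
    \|H(p+i\lambda/2)\|
    \leq \sum_{y\in F}\|H_y\|\,e^{-\lambda\cdot y/2}
    \leq C\,\exp\!\bigl(h_K(-\lambda/2)\bigr)
    = C\,\exp\!\bigl(h_K(\lambda/2)\bigr).
\end{equation*}
Since every spectral value has modulus at most the norm, this controls $2\im\omega$ in \eqref{formula}, giving
\begin{equation*}
    \rateL(\lambda) \leq 2C\,e^{h_K(\lambda/2)},
    \qquad
    \ratef(x) \geq \sup_\lambda\Bigl[\lambda\cdot x - 2C\,e^{h_K(\lambda/2)}\Bigr].
\end{equation*}

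To evaluate this supremum I would decouple radius and direction. Since $0\in\mathrm{int}(K)$, the function $h_K$ is strictly positive off the origin, so every nonzero $\lambda$ can be written uniquely as $\lambda=2s\nu$ with $s>0$ and $h_K(\nu)=1$. The inner one-dimensional optimization is a standard exponential-versus-linear max:
\begin{equation*}
    \sup_{s>0}\bigl[2s(\nu\cdot x)-2C e^s\bigr]
    = 2(\nu\cdot x)\log\!\frac{\nu\cdot x}{eC},
\end{equation*}
attained at $s^\ast=\log(\nu\cdot x/C)$ whenever $\nu\cdot x>C$. The remaining outer supremum over $\nu$ is handled by the classical polar duality between gauge and support functions, $\sup_{\,h_K(\nu)\leq 1}\nu\cdot x = g_K(x) = g(x)$, combined with the fact that $u\mapsto 2u\log(u/(eC))$ is monotonically nondecreasing on $[C,\infty)$. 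Taking $\nu$ to approach the supremum therefore gives $\ratef(x)\geq 2g(x)\log(g(x)/(eC))$ whenever $g(x)\geq eC$; when $g(x)<eC$ the claimed right-hand side is nonpositive and the bound is trivial from $\ratef\geq 0$.

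The only step requiring genuine care is the sign bookkeeping. The norm bound literally produces $h_K(-\lambda/2)$, and it is precisely the central symmetry $K=-K$ forced by Hermiticity that converts it into $h_K(\lambda/2)$; without this symmetry the same computation would output $g_K(-x)$ in place of $g(x)$. Everything else is either an explicit hypothesis ($0\in\mathrm{int}(\co F)$, finiteness of $C$) or a standard convex-analytic identity (polar duality, Legendre transform of an exponential).
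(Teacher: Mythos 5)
Your proof is correct and takes essentially the same route as the paper: the key bound $R(\lambda)\leq 2C\exp\bigl(R_F(\lambda/2)\bigr)$ (which you obtain from the operator norm of $H(p+i\lambda/2)$, while the paper bounds $\im\,\omega=\im\langle\psi, H(p+i\lambda/2)\psi\rangle$ for a normalized eigenvector --- the same inequality), followed by a Legendre transform restricted to a single ray, your explicit polar duality $\sup\{\nu\cdot x\mid h_K(\nu)\leq1\}=g(x)$ playing exactly the role of the paper's choice of a $\lambda_*$ with $R_F(\lambda_*)\leq1$ and $\lambda_*\cdot x\geq\alpha$. Your observation that Hermiticity forces $F=-F$, so that the factor $e^{-\lambda\cdot y/2}$ can be replaced by $e^{\lambda\cdot y/2}$ and the result comes out in terms of $g(x)$ rather than $g(-x)$, is a sign issue the paper passes over silently, and it is good that you made it explicit.
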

Note that this bound is non-trivial only for $g(x)>e C$, but then grows a bit faster than linearly. Moreover, it sets a bound on the propagation region, namely
\begin{equation}\label{propHam}
    \Gamma\subset eC\ \co F.
\end{equation}

\begin{proof} Let $\omega$ be a complex eigenvalue of $H(p+i\lambda/2)$, and $\psi$ a corresponding normalized eigenvector.
Then
\begin{align*}
  \im\:\omega&=\im\:\bra\psi H(p+i\lambda/2)\ket\psi  \\
           &\leq \sum_y e^{y\cdot\lambda/2}\im\left(\braket\psi{H_y\psi}e^{-iy\cdot p}\right)\\
           &\leq \max_{y\in F} e^{y\cdot\lambda/2}\ \sum_y\norm{H_y} \\
           &= C \exp\left( \rateL_F(\lambda/2)\right),
\end{align*}
which we can summarize as
\begin{equation}\label{LRrateL}
    R(\lambda)\leq 2C\, \exp(\rateL_F(\lambda/2)),
\end{equation}
where $\rateL_F$ is defined as in the proof of Lem. \ref{lem:Iinf}.
Now suppose that $g(x)>\alpha$, i.e., $x\notin \alpha\co F$. Then there is some $\lambda_*$ such that $\rateL_F(\lambda_*)\leq1$, but $\lambda_*\cdot x\geq\alpha$. We will get a lower bound on $\ratef(x)$, by extending the supremum \eqref{Legendre} only over subset
of $\lambda=2t\lambda_*$, where $t\geq0$. Thus
\begin{align*}
  \ratef(x)&\geq\sup_{t\geq0}\Bigl\{2t\lambda_*\cdot x -2C\,\exp(\rateL_F(t\lambda_*))\Bigr\}  \nonumber\\
       &\geq 2 \sup_{t\geq0}\bigl\{\alpha t -C\,e^t\bigr\}\nonumber\\
       &= 2\alpha(\log\frac\alpha C-1), \nonumber
\end{align*}
where we have assumed that $\alpha>C$, since otherwise the supremum is attained at $t=0$, giving the trivial bound $\ratef(x)\geq-C$.
Since this bound is monotone in $\alpha<g(x)$, we may replace $\alpha$ by $g(x)$, which yields \eqref{LiebR}.
\end{proof}

\subsection{Small $\lambda$ and behavior near $\partial\Gamma$}
In order to analyze the behavior of the rate function $I$ near the boundary of the propagation region $\Gamma$, we consider the behavior of $\rateL$ for small $\lambda$. We first note that as a consequence of the identity $W^{-1}(p+i\lambda)=W^*(p-i\lambda)$, any isolated eigenvalue $e^{-i\omega_j(p+i\lambda)}$ satisfies
\begin{equation}
\im\:\omega_j(p+i\lambda)=-\im\:\omega_j(p-i\lambda).
\end{equation}
Therefore, around any regular point $p$ the even powers in the expansion of this expression vanish, and we can write: 
\begin{equation}\label{expnw1}
    \im\:\omega_j(p+i\lambda)=\lambda\cdot\nabla\omega_j(p)
         +\Order(\lambda^3)
\end{equation}
To obtain an expression for $R$, according to \eqref{formula} we have to take the maximum of this with respect to $p$. Doing this in the approximation \eqref{expnw1} by \eqref{specV} directly leads to $\rateL(\lambda)\approx\rateL_\Gamma(\lambda)$, with $\rateL_\Gamma$ defined as above by $\rateL_\Gamma(\lambda)= \sup\{\lambda\cdot x|x\in \Gamma\}$. The corresponding rate function $\ratef_\Gamma$ vanishes on $\Gamma$, as $\ratef$ does, but is infinite everywhere outside $\co\Gamma$. The approximation \eqref{expnw1} is thus useless for learning anything about the behavior of $\ratef$ near the boundary $\partial\Gamma$. 

We will thus go to higher orders in the Taylor expansion \eqref{expnw1}. However, it is clear from the outset that such an expansion cannot lead to a full evaluation of $\rateL$ in general: This is defined as the global supremum of a function that is periodic in $p$, and the expansion will destroy that property and introduce artifacts, especially in the polynomial growth for large $p$. Similarly, the Legendre transform to the rate function $\ratef$ can only be estimated from below from any local description of $\rateL$. Therefore, the best we can hope for is a heuristic description of the typical behaviour near the boundary.

So let us fix $\lambda$ and start from the maximization of \eqref{expnw1} with respect to $p$ and $j$. This is the same as the maximization of $n\cdot\nabla\omega_j(p)$ for the dual direction $n=\lambda/\abs\lambda$, i.e., exactly the same variational problem we need for determining the boundary points of $\co\Gamma$. We assume a unique solution, as will be generically the case, and accordingly fix $p$ and $j$ in the sequel.  $p$ will be the base point for the Taylor expansion of $\omega_j$. 
We note that these data are generally different for $\lambda$ and for $-\lambda$. The expansion then reads
\begin{equation}\label{expnw3}
\begin{aligned}
  \im\,\omega_j(p+i\lambda+\delta p)=\omega_{j|\alpha}\lambda_\alpha\,+\mskip-200mu&\\
  &+\frac16\ \omega_{j|\alpha\beta\gamma}\lambda_\alpha\Bigl(-\lambda_\beta\lambda_\gamma+3\delta p_\beta\,\delta p_\gamma\Bigr)\\
  &+ \frac1{6}\ \omega_{j|\alpha\beta\gamma\eta}\lambda_\alpha\Bigl(-\lambda_\beta\lambda_\gamma
            +\delta p_\beta\,\delta p_\gamma\Bigr) \delta p_\eta
    +\cdots
\end{aligned}
\end{equation}
Here we use the shorthand $f_{\vert\alpha}:=\partial f/\partial p_\alpha$ and the summation convention by which repeated greek indices are understood to be summed over. The combinatorial factors arise from a $1/n!$ in the Taylor expansion and the observation that the derivatives are permutation symmetric in the indices, so terms with the same number of $i\lambda$- and $\delta p$-components but in different positions are equal. Moreover, the term with second derivatives of $\omega_j$ is left out preemptively: It vanishes because $\partial_\beta(\lambda_\alpha\omega_{j|\alpha})$ vanishes at the extremum. For the same reason, in the third order term the matrix $m_{\beta\gamma}=\lambda_\alpha\omega_{j|\alpha\beta\gamma}$ is negative semidefinite.

In order to evaluate $\rateL(\lambda)$ via \eqref{formula} we need to compute the maximum of this expression over $\delta p$. For the dominant contribution, the term in the second line of \eqref{expnw3}, this is easy: By choice of $p$ its maximum is at $\delta p=0$. We expect a small correction to this when higher order terms are included. Setting the derivative of the right hand side of \eqref{expnw3} to zero leaves leaves the  linear term $m_{\beta\gamma}\delta p_\gamma$ from the second line and, from the third, a constant term $\propto\lambda^3$ plus a term $\propto(\lambda\delta p)^2$, which we neglect. Assuming the matrix $m$ to be invertible, we can solve for $\delta p$ and get a correction $\delta p\propto \lambda^2$. Inserting that back into the right hand side we get a correction $\propto \lambda^4$, which is neglected in comparison to the higher order terms that we left out in the first place. Thus we conclude that to order $\lambda^3$ we can set $\delta p=0$.

Let us introduce a variant of the function $\rateL$, a ``boundary version'', in which only this conclusion is realized, i.e. $R_b(\lambda)=\im\,\omega_j(p+i\lambda)$, where $(j,p)$ are chosen to maximize $\lambda\cdot\nabla\omega_j(p)$. Then we can summarize the above heuristic discussion as
\begin{equation}\label{Rboundary}
  R(\lambda)\geq R_b(\lambda)=R(\lambda)+\Order(\lambda^4)=
            \omega_{j|\alpha}\lambda_\alpha
             -\frac16\ \omega_{j|\alpha\beta\gamma}\lambda_\alpha\lambda_\beta\lambda_\gamma +\Order(\lambda^4)
\end{equation}
We will use this for computing the Legendre transform for $\ratef$ half-ray by half-ray, i.e., along sets $\lambda=tn$, $t\geq0$ for some unit vector $n$. Then $R(tn)=a t+bt^3+\Order(t^4)$ with $a\geq0$, and $b\geq0$ due to the negative definiteness of $m$. Thus we need

\begin{lem}\label{lem:bdary}
Let $a\geq0$, $b>0$, and  $R:\Rl_+\to\Rl$ be such that
$$R(t)=at+bt^3 + \Order(t^4)\quad.$$
Consider its Legendre transform $I(x)=\sup_{t\geq0}\{xt-R(t)\}$. Then for $x\geq a$:
\begin{equation}\label{LegThreeHalf}
    I(x)\geq\frac{2}{\sqrt{27\,b}}\ (x-a)^{3/2}+\Order((x-a)^{2})\quad.
\end{equation}
\end{lem}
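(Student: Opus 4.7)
The idea is to lower-bound the supremum by evaluating the bracket $xt - R(t)$ at the critical point of its leading part, and to control the contribution of the $O(t^4)$ remainder separately.

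First I would drop the $O(t^4)$ term and maximize the cubic $\varphi(t) := (x-a)t - bt^3$ over $t\geq 0$. Setting $\varphi'(t)=0$ gives
\begin{equation*}
t_* = \sqrt{\frac{x-a}{3b}},
\end{equation*}
which is admissible whenever $x \geq a$. A direct calculation yields
\begin{equation*}
\varphi(t_*) = (x-a)\,t_* - b\,t_*^3 = \frac{(x-a)^{3/2}}{\sqrt{3b}}\Bigl(1 - \tfrac{1}{3}\Bigr) = \frac{2}{\sqrt{27b}}\,(x-a)^{3/2}.
\end{equation*}

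Next I would use the hypothesis on $R$, which means there exist constants $t_0>0$ and $C<\infty$ with $|R(t) - at - bt^3| \leq C t^4$ for $t\in[0,t_0]$. Since $x\geq a$, we have $xt_* - R(t_*) \geq (x-a)t_* - bt_*^3 - C t_*^4 = \varphi(t_*) - C t_*^4$. Plugging in $t_*^4 = (x-a)^2/(3b)^2$ gives a remainder of size $O\bigl((x-a)^2\bigr)$. For $x-a$ small enough that $t_*\leq t_0$ (i.e., $x-a$ in a fixed neighborhood of $0$), the bound
\begin{equation*}
I(x) \geq x t_* - R(t_*) \geq \frac{2}{\sqrt{27 b}}\,(x-a)^{3/2} + O\bigl((x-a)^2\bigr)
\end{equation*}
follows from $I(x) = \sup_{t\geq 0}\{xt - R(t)\} \geq xt_* - R(t_*)$.

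The only mildly delicate point is to make sure the $O(t^4)$ error in the hypothesis on $R$ translates into an $O((x-a)^2)$ error in the Legendre transform with a constant that absorbs $b$ in the way one wants; this is automatic because $t_*^4$ scales exactly as $(x-a)^2$. Everything else is the one-line calculation of the critical value $\varphi(t_*)$.
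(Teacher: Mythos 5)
Your argument is correct and is essentially the same as the paper's proof: evaluate $xt-R(t)$ at the critical point $t_m=\sqrt{(x-a)/3b}$ of the cubic part, obtaining $\tfrac{2}{\sqrt{27b}}(x-a)^{3/2}$, and note that for $x-a$ small enough the bound $|r(t)|\leq ct^4$ applies at $t_m$, giving a remainder $c\,t_m^4=c(3b)^{-2}(x-a)^2$. Nothing is missing.
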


\begin{proof}
For $x>a$ we have to maximize a function of the form $(x-a)t-bt^3$, which has a unique maximum for $t\geq0$ at $t_m=\sqrt{(x-a)/3b}$, where its value is the first term on the right hand side of \eqref{LegThreeHalf}.  To estimate the error term, set $R(t)=at+bt^3 + r(t)$ with $\abs{r(t)}\leq c t^4$ for $t\leq t_c$. Then for $(x-a)\leq 3bt_c^2$ we have $t_m\leq t_c$, so the bound on $r$ applies at $t_m$, and 
\begin{equation*}
  I(x)\geq (x-a)t_m-bt_m^3-r(t_m)\geq \frac2{\sqrt{27b}}(x-a)^{3/2} - c(3b)^{-2}\ (x-a)^2.
\end{equation*}
\end{proof}

Note that when we  just assume $R$ to be known asymptotically for $t\to0$ we cannot do better, since the true maximum could be attained anywhere on the half axis. However, if we add the assumption that $R$ is strictly convex, the local maximum in the proof must be the unique maximum, and  we get an asymptotic equality as $(x-a)\to0$.

Let us summarize the heuristic estimate for the boundary behaviour of the rate function $\ratef$. Every boundary point will contribute a lower bound. More precisely, each such bound is associated with a normal direction $n$ and a boundary point $x_*=\nabla\omega_j(p)$, with $(j,p)$ chosen to maximize $n\cdot\nabla\omega_j(p)$. We now assume that \eqref{Rboundary} holds and apply Lemma~\ref{lem:bdary}. The argument $x$ in the Lemma will be $n\cdot x$, $a=n\cdot x_*$, and 
\begin{equation}\label{bboundary}
  b=\omega_{j|\alpha\beta\gamma}n_\alpha n_\beta n_\gamma.
\end{equation}
Then 
\begin{equation}\label{Iboundary}
   I(x)\geq \frac{2}{\sqrt{27\,b}}\ \bigl(n\cdot(x-x_*)\bigr)_+^{3/2} + \Order\bigl((n\cdot(x-x_*))_+^2\bigr)
\end{equation}
Here $z_+$ denotes the positive part of $z\in\Rl$, and we use this to express that the bound is non-trivial only for the points 
with $(n\cdot(x-x_*))>0$, for which surely $x\notin\co\Gamma$. For the overall bound one can take the supremum over all $(x_*,n,p,j)$.

  % Discussion, general properties
\section{Examples}\label{sec:Exmp}

\subsection{1D Qubit walk}
The asymptotics of quantum walks has been mostly investigated in the allowed region: in the one-dimensional setting \cite{konno2005new,carteret}, in the two-dimensional setting \cite{PhysRevA.77.062331,baryshnikovTwodimensionalQuantumRandom2010} and in arbitrary dimension \cite{Scudo,timerandom}. To our best knowledge, only two studies \cite{Carteret2,SunaTa} have been made of the asymptotics outside of the propagation region. In these studies a detailed stationary phase analysis was carried out involving the choice of contours in the complex plane. In addition to the exponential decay \eqref{LDbound}, this also gives the prefactors. Hence our method provides less information, but is much simpler to apply, and therefore has a better chance to analyze more complex cases. The simplest walks are of the form
\begin{equation}\label{1Dsimple}
    W(p)=\left(\begin{array}{cc}e^{ip}&0\\0&e^{-ip}\end{array}\right)
         \left(\begin{array}{cc}a&b\\-\overline b&\overline a\end{array}\right),
\end{equation}
with $\abs a^2+\abs b^2=1$.
Since $\det W(p)=1$, the dispersion relations are entirely determined by $\tau(p)=\frac12\tr W(p)=\bigl(ae^{ip} +\overline ae^{-ip}\bigr)/2$. The phase of $a$ can be compensated by a shift in $p$, which is irrelevant for our question, so we assume from now on that $a=\abs a>0$. Then $W(p)$ has the eigenvalues $\omega_\pm(p)=\pm\arccos(a \cos(p))$. This gives the group velocities
\begin{equation}\label{1Dgroupv}
    v_\pm(p)=\frac{\pm a \sin (p)}{\sqrt{1-a^2 \cos ^2(p)}},
\end{equation}
which takes their extrema $\pm a$ for $p=\pm\pi/2$. So these are the boundary points of the propagation region.
The maximum of $\abs{\exp(-i(\omega_\pm(p+i\lambda)))}$ for $\lambda\in\Rl$ is also assumed at $p=\pm\pi/2$. Indeed, the derivative of this expression with respect to $p$ equals zero iff $v_\pm (p+i\lambda) =\overline{v_\pm (p+i\lambda)}$, which, after some algebra, yields the $\lambda$-independent equivalent condition $\sin (p) \cos (p)=0$. Hence in this case we have $R(\lambda)=R_b(\lambda)$ with $R_b$ as in \eqref{Rboundary}. With this information we directly get from \eqref{formulaW}:
\begin{equation}\label{1Dprerate}
    \rateL(\lambda)=2\arsinh\Bigl(a \sinh\bigl(\abs{\lambda/2}\bigr)\Bigr).
\end{equation}
It turns out that the equation $\rateL'(\lambda)=x$ can be solved explicitly for $\lambda$, which gives
\begin{equation}\label{1Dlambda}
    \lambda(x)=2\log \left(\frac{\sqrt{x^2-a^2}+x\,\sqrt{1-a^2}}{a \sqrt{1-x^2}}\right),
\end{equation}
for $1\leq a\leq x\leq1$, and in this range the Legendre transform
\begin{equation}\label{1Drate}
    \ratef(x)= x \lambda(x)
                 -2\log \left(\frac{\sqrt{x^2-a^2}+\sqrt{1-a^2}}{\sqrt{1-x^2}}\right).
\end{equation}
These functions are displayed in Fig.~\ref{fig:1DRI}.
This coincides with  \cite[(1.17)]{SunaTa}, but disagrees with \cite{Carteret2}.
%%%%
\begin{figure}[t]
  \centering
  \includegraphics[width=\textwidth]{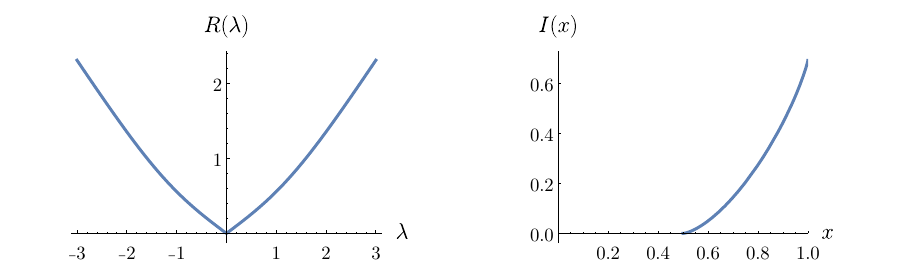}
  \caption{The functions $\rateL$ and $\ratef$ after \eqref{1Dprerate} and \eqref{1Drate} for $a=.5$.}
  \label{fig:1DRI}
\end{figure}
We note for later purposes the expansions near the boundary which follow \eqref{Rboundary} and \eqref{Iboundary}, i.e.,
\begin{align*}
  \rateL(\lambda)&=a\lambda+\frac16\,(a-a^3)\lambda^3+\Order(\lambda^4)\\
  \ratef(x)      &=\sqrt{\frac83}\frac{(x-a)_+^{3/2}}{\sqrt{a-a^3}} +\Order((x-a)_+^{2}).
\end{align*}
%%%%

\subsection{A 2D walk with circular light cone}
In a one-dimensional system the branches $\omega_j$ can always be chosen locally to be analytic. In fact, a basic result of perturbation theory \cite[§1. Theorem 1.10]{Kato} states that even at degenerate points one can make this choice. In higher dimensions one aspect of this structure survives: along any straight line through a degenerate point $p_*$, say $t\mapsto p_*+t\delta p$, one can pick analytic branches, and in particular some discrete set of slopes $d\omega_j(p_*+t\delta p)/dt$. However, these slopes in general do not belong to several intersecting analytic functions (in first order an intersection of planes) but may instead form a cone. The example we give here is perhaps the simplest in which this happens. Moreover, the conical singularity is essential for determining the outer boundary of the propagation region $\Gamma$, which in this case is a disc. We set
\begin{equation}\label{W2DWeyl}
    W(p_1,p_2)=\exp(ip_1\sigma_1)\exp(ip_2\sigma_3)
    =\left(
\begin{array}{cc}
 e^{i p_2} \cos p_1 & i e^{-ip_2}\sin p_1 \\
 i e^{ip_2} \sin p_1 & e^{-i p_2} \cos p_1
\end{array}
\right),
\end{equation}
which is equivalent to the product of two one-dimensional walks of the form \eqref{1Dsimple} for $a=1/\sqrt{2}$.
Its dispersion relation is
\begin{equation}\label{2DWeylOmega}
    \omega_\pm(p_1,p_2)=\pm\arccos\bigl(\cos p_1\cos p_2\bigr)=:\pm\omega(p).
\end{equation}
%%%%
\begin{figure}[t]
	\begin{center}
		\includegraphics[width=0.9\textwidth]{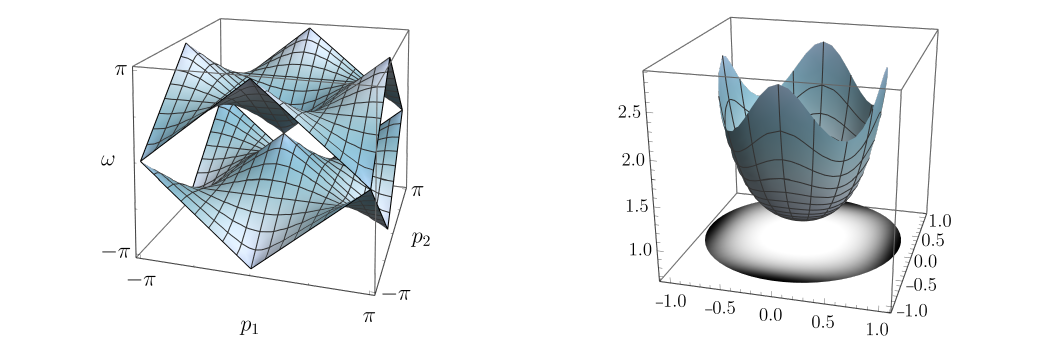}
		\caption{\label{fig:2DWeyl}Left: Dispersion relation $\omega_\pm(p)$ according to \eqref{2DWeylOmega}.
			Right: Probability density of the group velocity for particles starting at the origin. The mesh on the graph is in polar coordinates for velocity, shown up to $\abs v=.8$.
			Bottom right: Propagation region, i.e., the unit disc, shaded according to probability density.}
	\end{center}
\end{figure}
%%%%
The conical structure at $p=0$ becomes apparent when we expand $\omega(\varepsilon p)$ in a Taylor series in $\varepsilon$:
\begin{equation}\label{omCone2}
    \omega(\varepsilon p_1,\varepsilon p_1) =\varepsilon\abs{p\,} -\frac{\varepsilon^3}6\,\frac{p_1^2\,p_2^2}{\abs{p\,}}
    -\frac{\varepsilon^5\ p_1^2\,p_2^2}{360\abs{p\,}^3}\Bigl(4p_1^4+13p_1^2\,p_2^2+4p_2^4\Bigr)
    +\Order(\varepsilon^7)\quad,
\end{equation}
where $\abs{p}=\sqrt{p_1^2+p_2^2}$. Of course, the same phenomenon happens at all points where $\cos(p_1)\cos(p_2)=\pm1$, i.e., $p_1,p_2=0,\pi$.
This qualitatively explains Figure \ref{fig:2DWeyl}. The group velocity can be determined directly from \eqref{W2DWeyl}:
\begin{equation}\label{2DWeylV}
    \nabla\omega_\pm(p)=\frac{\pm1}{\sqrt{1-\cos^2p_1\cos^2p_2}}
        \left(\begin{array}{r}\sin p_1 \cos p_2\\ \cos p_1\sin p_2\end{array}\right)
\end{equation}
with modulus
\begin{equation}\label{2DWeylV}
    \Bigl|{\nabla\omega_\pm(p)}\Bigr|^2=\frac{(1-\cos^2 p_1)\cos^2 p_2+\cos^2 p_1(1-\cos^2p_2)}{({1-\cos^2p_1 \cos^2p_2})}
\end{equation}
This is a monotone function of each $\cos^2p_i$, and equal to $1$, whenever one of these variables is equal to one. Hence the propagation region is contained in the disc $\abs{v\,}\leq1$. The boundary points are all attained, but apart from the points on the axes only in the limit $\abs{p\,}\to0$ (compare \eqref{omCone2}). The velocity distribution starting from the origin (i.e., with flat momentum distribution is shown in Figure \ref{fig:2DWeyl}.

\subsection{A 2D walk with non-convex propagation region}\label{sec:non_conv}
There is no reason why the propagation region $\Gamma$ should be a convex set. Figure \ref{fig:nonconvex} provides a somewhat minimal example. It is in the Hamiltonian context, without internal degree of freedom ($\dim\KK=1$)
and the Hamiltonian
\begin{equation}\label{nonconvexH}
  H(p_1,p_2)=\cos p_1 \: \cos p_2 -\sin p_1.
\end{equation}
\begin{figure}[ht]
    \centering
  \includegraphics[width=0.4\textwidth]{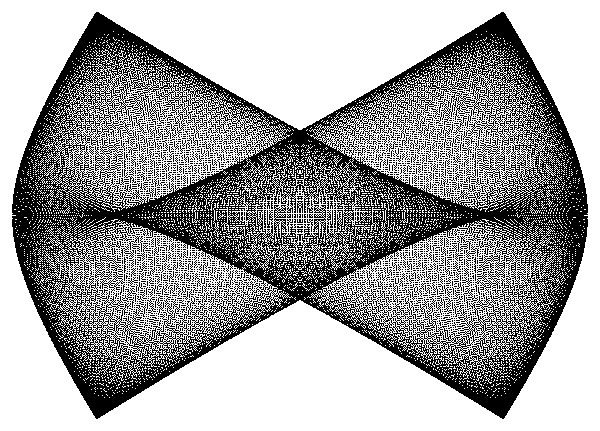}
  \caption{Propagation region for the Hamiltonian \eqref{nonconvexH}. The plot is generated by computing the group velocity on a regularly spaced grid in momentum space. The density of points thus corresponds to the asymptotic probability density for any reasonably well localized initial state.}
 \label{fig:nonconvex}
\end{figure}
For an example in the walk context, see \cite[Figure 3]{baryshnikovTwodimensionalQuantumRandom2010}.

Of course, it is to be expected that large deviation bounds depending on the distance from the boundary also hold inside the non-convex indentations. However, our methods do not provide such statements. They give a positive rate function only outside the convex hull of $\Gamma$ (recall that we approximate by half spaces).

\subsection{3D cones and rate functions}
Some quantum walks on the three-dimensional cubic lattice were proposed by Bialinycki-Birula  \cite{bb94} as discrete approximations to the Weyl, Dirac, and Maxwell equations. The interest in that paper is mainly in the continuum limit, and the possible variations of nearest neighbour walks automata for which this works. But one can take the discrete model as a system in its own right and analyze its propagation properties. Let us consider the simplest of these, the ``Weyl equation'' model in \cite{bb94}. We then have $\KK=\Cx^2$, and
\begin{equation}\label{bb94W}
    W(p_1,p_2,p_3)=e^{ip_1\sigma_1}\ e^{ip_2\sigma_2}\ e^{ip_3\sigma_3}.
\end{equation}
This gives
\begin{equation}\label{bb94omega}
    \omega_\pm(p)=\pm\omega(p)=\pm\arccos\Bigl(\cos p_1\cos p_2\cos p_3
                        -\sin p_1\sin p_2\sin p_3\Bigr).
\end{equation}
A characteristic feature here are conical singularities at $p=0$ and eight inequivalent points in the Brillouin zone, where  $p_1,p_2,p_3\in\pi\Ir$, meaning that all $\cos p_i=\pm 1$ and each matrix factor in \eqref{bb94W} is $\pm\idty$.
Expanding the cosine of \eqref{bb94omega} to second order in a small deviation $p$ from such a conical point $p^\vee$, we get, for any combination of the signs $\cos(p_k^\vee)$,
\begin{equation}\label{bb94omega_expansion}
    \omega(p^\vee+p)^2=  (p_1^2+p_1^2+p_1^2)+ \Order(p^3)
\end{equation}
This is in keeping with the remarks in Section \ref{sec:propregion} on regular vs.\ singular points in a dispersion relation. Along a straight line through $p^\vee$ we can use one-parameter perturbation theory, in which we can always choose analytic branches, corresponding to the straight lines in the boundary of the cone, but as a function of three parameters $\omega(p+p^\vee)\approx\omega(p^\vee)\pm\abs p$ is {\it not} analytic.
In the limit taken in \cite{bb94} only small momenta survive, so the propagation is indeed exactly the familiar light cone from relativistic physics. However, in this model faster propagation speeds can arise from points at a finite distance from the singularities. The overall propagation region is shown in Figure \ref{fig:3DWeylGamma}.
%%%%
\begin{figure}[ht]
    \centering
  \includegraphics[width=0.4\textwidth]{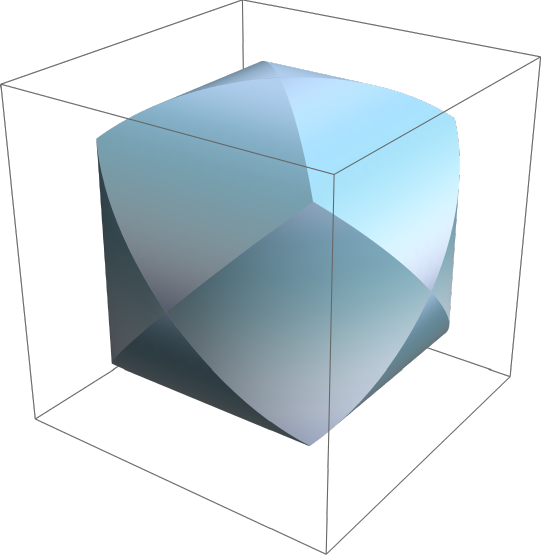}
  \caption{Propagation region for the walk \eqref{bb94W}.
    This convex set is the intersection of three orthogonal cylinders.}
  \label{fig:3DWeylGamma}
\end{figure}

To get a spherical propagation region, we consider a modified model, on the same lattice and also with $\KK=\Cx^2$, and with exactly the same conical points $p^\vee$, namely
\begin{align}\label{bb94spherical}
    W(p_1,p_2,p_3)&=\begin{pmatrix}
                      \cos p_1\,\cos p_2\ e^{ip_3} & -\cos p_1\,\sin p_2+ i\sin p_1 \\
                      \cos p_1\,\sin p_2+ i\sin p_1 & \cos p_1\,\cos p_2\ e^{-ip_3}
                    \end{pmatrix}\\
    \omega(p)&=\arccos\bigl(\cos p_1\cos p_2\cos p_3\bigr).\label{bb94threecos}
\end{align}
The modulus of the group velocity $\nabla \omega$ is then given by
\begin{align}\label{bb94sphericalom}
    \abs{\nabla\omega}^2&=\frac{\sin^2 p_1\cos^2p_2\cos^2p_3+\cos^2 p_1\sin^2p_2\cos^2p_3+\cos^2 p_1\cos^2p_2\sin^2p_3}{1-\cos^2 p_1\cos^2 p_2\cos^2 p_3} \nonumber\\
           &=1-\frac{(1-c_3)(1-c_1 c_2) + c_3(1-c_1)(1-c_2)}{1-c_1c_2c_3}
\end{align}
where, in the second line we have abbreviated $c_i=\cos^2p_i$. Since the fraction in the second line is manifestly positive, we get $\abs{\nabla\omega}\leq1$. The maximum is reached at the conical points $p^\vee$, where $c_1=c_2=c_3=1$.

In order to compute the rate function we first notice that the analytic continuation of $W$ to complex $p$ is still a $2\times2$-matrix with determinant $1$, so the trace (equal to $2\cos\omega(p)$) determines the secular equation and hence both eigenvalues. That is, all information we need to evaluate $\rateL(\lambda)$ from \eqref{formulaW} is contained in the analytic continuation of \eqref{bb94threecos}.

We further reduce the problem by not looking at the full rate function, but a radial version of it: $I(x)$ retains some of the lattice symmetry and is certainly not a rotation invariant function. As described in the introduction, we are however, mainly interested in an upper bound on the probability leakage outside a ball with slightly superluminal speed. We therefore look for a ``radial rate function'' $\ratefr$ such that $\ratef(x)\geq \ratefr(r)$ if $\abs x\geq r$. Then the bound \eqref{LDbound} for the probability outside a ball $M_{<r}$ of radius $r$ becomes
\begin{equation}\label{radialrate}
  \limsup_{t\to\infty} \frac1t\, \log p_t(\rho,M_{\geq r})\leq -\ratefr(r)
\end{equation}
which vanishes up to $r=1$. The Legendre transform is friendly to the radial reduction: Suppose that $\rateL(\lambda)\leq \rateLr(\ell)$, whenever $\abs\lambda\leq\ell$. Then, for $\abs x\geq r$,
\begin{align}\label{Iradial}
  \ratef(x) &=\sup_{\ell>0}\sup_{\abs\lambda\leq\ell} \bigl\lbrace x\cdot\lambda-\rateL(\lambda)\bigr\rbrace
     \geq \sup_{\ell>0} \sup_{\abs\lambda\leq\ell} \bigl\lbrace x\cdot\lambda-\rateLr(\ell)\bigr\rbrace\nonumber \\
    &= \sup_{\ell>0} \bigl\lbrace \abs x\ell-\rateLr(\ell)\bigr\rbrace
     \geq \sup_{\ell>0} \bigl\lbrace r\ell-\rateLr(\ell)\bigr\rbrace
     =:\ratefr(r),
\end{align}
which is the Legendre transform of $\rateLr$. Here the inequality signs were chosen to make $\ratefr$ and  $\rateLr$ automatically monotone.

This leaves us with the task to compute, for every $\ell\geq0$ the dual radial rate function
\begin{equation}\label{formulaWr}
  \rateLr(\ell)= \sup\Bigl\{\log\abs u^2 \Bigm| u \in\spec\Bigl(W\bigr(p+\tfrac{i\lambda}2\bigr)\Bigr)\
                    \text{for some}\ p,\lambda\in\Rl^3,\ \abs\lambda\leq\ell \Bigr\}
\end{equation}
Here it is useful to note that the eigenvalues are $u=\exp(i\omega)$ with $\omega\in\Cx$ determined from \eqref{bb94threecos} with complex substitution $p_k\mapsto p_k+i\lambda_k/2$, so that $\log\abs u^2=2\abs{\im\omega}$. The absolute value here takes care of the maximum over the spectrum (and the sign ambiguity for $\omega$). This makes $\log\abs u^2$ straightforward to evaluate for given $p,\lambda$.

The eigenvalue branches of $W$ are analytic in the complex vector variable $(p+i\lambda/2)$ except for the branch points at $p=p^\vee$, $\lambda=0$. Therefore, by the maximum principle \cite[Thm.~V.2.3]{Grauert}, no local maximum can occur away from these points and $\abs\lambda<\ell$. At the conical points the function is continuous and zero. The numerical evaluation of the maximum as a function of $\ell=\abs\lambda$ is shown in Fig.~\ref{fig:rateLr}.
\begin{figure}[ht]
    \centering
  \includegraphics[height=0.2\textheight]{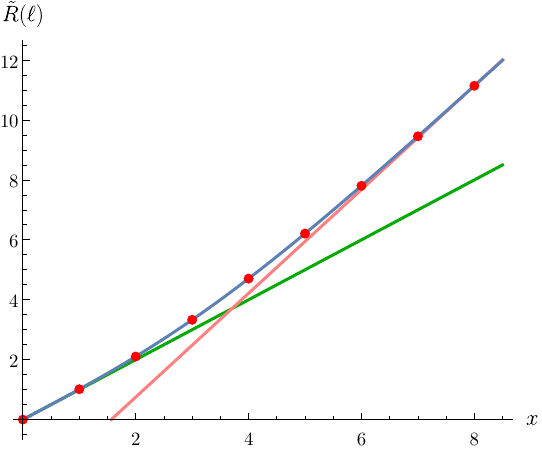}\qquad\qquad
  \includegraphics[height=0.2\textheight]{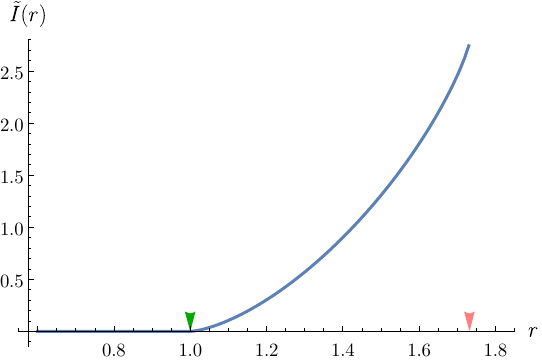}
  \caption{Left: Numerical evaluation of \eqref{formulaWr} (red dots). The blue solid line represents \eqref{AnarateLr}. The straight lines are the asymptotes with slope $1$ (green) and slope $\sqrt3$ (pink).\\
  Right: Radial rate function $\ratefr$, computed analytically from \eqref{AnarateLr} and \eqref{Iradial}. The marked spots on the $r$-axis correspond are the slopes of the asymptotes in the left diagram. }
  \label{fig:rateLr}
\end{figure}
It turns out that the maximum is always attained for $\lambda=\ell/\sqrt3 (1,1,1)$ and $p=0$, and equivalent points. At these points the function is easily evaluated as
\begin{equation}\label{AnarateLr}
  \rateLr(\ell)= 2{\mathrm{arcosh}}\Bigl( \cosh(\ell/2\sqrt3)^3\Bigr).
\end{equation}
The Legendre transform $\ratefr$ can be evaluated analytically (see Figure \ref{fig:rateLr}), but the expression is not very enlightening. We have $\ratefr(r)=0$ for $r<1$, i.e., in the propagation region, and $\ratefr(r)=\infty$ for $r>\sqrt3$, which corresponds to the largest jump vector $(1,1,1)$ (and equivalents) visible in \eqref{bb94spherical} as the appearance of trigonometric power $\exp(i(p_1+p_2+p_3)$. This corresponds to the trivial bound in Lemma~\ref{lem:Iinf}. For intermediate $r$ we have, for example,
$\ratefr(1.1)=.104$, so that the probability for position $Q(t)$ outside a ball of radius $1.1\,t$ after $t$ steps decays at least like $\exp\bigl(-t\ratefr(1.1)\bigr)\approx.9^t$.

  % examples

%\vfill

\section*{Acknowledgements}
The main idea and body of the paper originate from a visit of A.J. to Hannover in 2012.  We decided to wrap up and complete it, because, to the best of our knowledge, it still provides the best result of its kind. 
The authors thank Andr\'e Ahlbrecht for stimulating discussions on early drafts of this manuscript.
C. Cedzich acknowledges partial support by the Deutsche Forschungsgemeinschaft (DFG, German Research Foundation) -- 441423094.

	 % acknowledgements

\bibliographystyle{abbrvArXiv}
\bibliography{extail}

\begin{appendix}
\section{Continuity of the supremum}\label{app:sup_cont}

In this appendix we prove the following topological Lemma. The proof is taken and adapted from \cite{WWong} and, besides completeness, is stated here to prevent the fleeting nature of blog posts:
\begin{lem}
	Let $X,Y$ be topological spaces with $Y$ compact, and let $f:X\times Y\to\Rl$ be (jointly) continuous. Then $g(x):=\sup_{y\in Y}f(x,y)$ is well-defined and continuous.
\end{lem}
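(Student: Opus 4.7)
My plan is to first note that $g$ is well-defined because for each fixed $x$, the continuous function $f(x,\cdot)$ attains its maximum on the compact space $Y$. Then I would establish continuity of $g$ by splitting it into upper and lower semi-continuity at an arbitrary point $x_0\in X$, i.e., for every $\varepsilon>0$ exhibit a neighborhood $U$ of $x_0$ such that $g(x_0)-\varepsilon<g(x)<g(x_0)+\varepsilon$ for all $x\in U$.

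For lower semi-continuity, the idea is to use a single maximizer. Pick $y^*\in Y$ with $f(x_0,y^*)=g(x_0)$. Joint continuity of $f$ at $(x_0,y^*)$ gives a product neighborhood $U_-\times V$ on which $f(x,y)>g(x_0)-\varepsilon$. In particular, $g(x)\geq f(x,y^*)>g(x_0)-\varepsilon$ for $x\in U_-$. This step does not require compactness.

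Upper semi-continuity is where compactness enters and is the main point of the argument. Here one cannot fix a single $y$ because the maximum might jump to a different $y$ as $x$ moves. For each $y\in Y$, joint continuity at $(x_0,y)$ yields an open neighborhood $U_y\times V_y$ of $(x_0,y)$ in $X\times Y$ on which $f(x',y')<f(x_0,y)+\varepsilon\leq g(x_0)+\varepsilon$. The family $\{V_y\}_{y\in Y}$ is an open cover of $Y$, so by compactness there is a finite subcover $V_{y_1},\dots,V_{y_n}$. Set $U_+=\bigcap_{i=1}^{n}U_{y_i}$, which is a finite intersection of neighborhoods of $x_0$ and hence itself open. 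For $x\in U_+$ and any $y\in Y$, pick $i$ with $y\in V_{y_i}$; then $(x,y)\in U_{y_i}\times V_{y_i}$, so $f(x,y)<g(x_0)+\varepsilon$. Taking the supremum over $y\in Y$ gives $g(x)\leq g(x_0)+\varepsilon$.

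Intersecting $U_-$ and $U_+$ yields the desired neighborhood of $x_0$. The main (and essentially only) obstacle is precisely the upper bound: without compactness one could only produce neighborhoods $U_y$ depending on $y$, and their intersection over an infinite family need not be open, whereas the finite-subcover step turns the pointwise bounds into a uniform one. Everything else is a direct unpacking of joint continuity and the definition of supremum.
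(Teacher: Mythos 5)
Your proof is correct and is essentially the paper's argument rewritten in $\varepsilon$-neighbourhood language: your lower-semicontinuity step (a single maximizer $y^*$) corresponds to the paper's observation that $g^{-1}((b,\infty))=\pi_X\bigl(f^{-1}((b,\infty))\bigr)$ is open, and your finite-subcover construction of $U_+$ is exactly the tube-lemma argument the paper uses to show $g^{-1}((-\infty,a))$ is open. No gaps.
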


\begin{proof}
	We prove the statement in three steps: first, for fixed $x\in X$, $f(x,\cdot):Y\to\Rl$ is continuous (by assumption) and bounded (since $Y$ is compact). Therefore, $g(x)<\infty$. 	
	
	Next, note that for $a,b\in\Rl$ the open sets $(-\infty,a)$ and $(b,\infty)$ form a subbase for the topology on $\Rl$ (the finite intersections of these sets form a basis for the standard topology on $\Rl$, see e.g. \cite[Appendix A2]{rudinFunctionalAnalysis1991}). Thus, the statement of the lemma follows from the openness of $g^{-1}((-\infty,a))$ and $g^{-1}((b,\infty))$.
	
	Let $\pi_X:X\times Y\to X$ be the canonical projection onto the first factor, which is open and continuous by definition. Clearly, $g^{-1}((b,\infty))=(\pi_X\circ f^{-1})((b,\infty))$, so that the continuity of $f$ implies that $g^{-1}((b,\infty))$ is open for any $b\in\Rl$.
	
	To show that also $g^{-1}((-\infty,a))$ is open we use the compactness of $Y$: first, note that $g(x)<a$ implies that $f(x,y)<a$ for all $y\in Y$ (by definition of $g$). But this is just saying that the set
	\begin{equation*}
		\big\{(x,y)\mid g(x)<a,(x,y)\in f^{-1}((-\infty,a))\big\}\subset X\times Y
	\end{equation*}
	is open. This implies that for every $x\in g^{-1}((-\infty,a))$ and every $y\in Y$ there is a neighbourhood $U_{(x,y)}\times V_{(x,y)}$ that is contained in $f^{-1}((-\infty,a))$ (by the definition of openness).
	Since $Y$ is compact, a finite subset $\{(x,y_i)\}_{i=1}^n$ labels such boxes that cover $\{x\}\times Y$, and hence
	\begin{equation*}
		\{x\}\times Y\subset\left[\bigcap_{i=1}^nU_{(x,y_i)}\right]\times Y\subset f^{-1}((-\infty,a)).
	\end{equation*}
	This implies that $g^{-1}((-\infty,a))\equiv \bigcup_{x\in g^{-1}((-\infty,a))}\bigcap_{i=1}^{n(x)}U_{(x,y_i)}$ is open, where $n(x)<\infty$ for all $x\in X$.
\end{proof}
\end{appendix}

\end{document}